\newcount\Comments  
\Comments=1

\def\arxiv{1} 

\ifnum\arxiv=1  
\documentclass[12pt]{article}
\usepackage{amsmath,amsfonts,amsthm,amssymb,bm,bbm} 
\usepackage{fullpage,microtype}
\usepackage{graphicx,subfig}
\usepackage[square,numbers,sort&compress]{natbib} 
\usepackage{authblk}
\usepackage[ruled]{algorithm2e}

\SetArgSty{textrm}  

\newtheorem{theorem}{Theorem}
\newtheorem{lemma}{Lemma}
\newtheorem{mainresult}{Main Result}
\newtheorem{observation}{Observation}
\newenvironment{proofidea}{\par{\noindent \textit{Proof Idea:} }}{\qed\par}
\theoremstyle{definition}
\newtheorem{definition}{Definition}

\numberwithin{theorem}{section}
\numberwithin{lemma}{section}
\numberwithin{definition}{section}

\newcommand{\Paragraph}[1]{\paragraph{#1.}}

\else 
\documentclass[prodmode,license]{acmsmall-ec15}

\doi{http://dx.doi.org/10.1145/2764468.2764519}

\clubpenalty=10000 
\widowpenalty = 10000

\usepackage{bm}
\usepackage[ruled]{algorithm2e}

\usepackage{amsmath,amsfonts,amssymb,bbm} 
\usepackage{boxedminipage}
\usepackage{subfig}
\usepackage[square,numbers,sort&compress]{natbib} 
\SetArgSty{textrm}  

\newtheorem{mainresult}{Main Result}

\newtheorem{observation}{Observation}

\newenvironment{proofidea}{\par{\noindent \textit{Proof Idea:} }}{\qed\par}
\newcommand{\Paragraph}[1]{\paragraph{#1}}

\fi 

\usepackage{color}
\definecolor{darkgreen}{rgb}{0,0.5,0}
\definecolor{darkorange}{rgb}{0.7,0.3,0}
\newcommand{\ignore}[1]{}
\newcommand{\kibitz}[2]{\ifnum\Comments=1\textcolor{#1}{#2}\fi}

\newcommand{\cj}[1]{\kibitz{blue}{\noindent[CJ: #1]}}

\newcommand{\bo}[1]{\kibitz{darkorange}{\noindent[Bo: #1]}}

\newcommand{\Omit}[1]{}

\def\Pr{\mathop{\textnormal{Pr}}}

\def\X{\mathcal{X}}
\def\Y{\mathcal{Y}}
\def\Z{\mathcal{Z}}
\def\D{\mathcal{D}}
\def\H{\mathcal{H}}

\def\F{\mathcal{F}}

\def\A{\mathcal{A}}
\def\1{\mathbbm{1}}
\DeclareMathOperator*{\E}{\mathbb{E}}
\def\R{\mathbbm{R}}

\def\algcost{\gamma_{T,\A}}
\def\algcostopt{\algcost^*}
\def\algcostmax{\algcost^{\max}}

\def\err{\mathcal{L}}

\newcommand{\It}{\1[z_t \text{\textnormal{ recvd}}]}

\def\loss{\ell}

\begin{document}

\title{Low-Cost Learning via Active Data Procurement}

\ifnum\arxiv=1  

\author[*]{Jacob Abernethy}
\author[**]{Yiling Chen}
\author[***]{Chien-Ju Ho}
\author[**]{Bo Waggoner}
\affil[*]{University of Michigan}
\affil[**]{Harvard SEAS}
\affil[***]{UCLA}
\date{May 2015}

\else 
\author{
JACOB ABERNETHY\affil{University of Michigan}
YILING CHEN\affil{Harvard University}
CHIEN-JU HO\affil{UCLA}
BO WAGGONER\affil{Harvard University}
}
\fi 

\ifnum\arxiv=1  
\maketitle
\fi

\begin{abstract}
We design mechanisms for online procurement of data held by strategic agents for machine learning tasks.
We study a model in which agents cannot fabricate data, but may lie about their cost of furnishing their data.
The challenge is to use past data to actively price future data in order to obtain learning guarantees, even when agents' costs can depend arbitrarily on the data itself.
We show how to convert a large class of no-regret algorithms into online posted-price and learning mechanisms.
Our results parallel classic sample complexity guarantees, but with the key resource constraint being money rather than quantity of data available. With a budget constraint $B$, we give robust risk (predictive error) bounds on the order of $1/\sqrt{B}$.
In many cases our guarantees are significantly better due to an active-learning approach that leverages correlations between costs and data.

Our algorithms and analysis go through a model of no-regret learning with $T$ arriving pairs (cost, data) and a budget constraint of $B$, coupled with the ``online to batch conversion''.
Our regret bounds for this model are on the order of $T/\sqrt{B}$ and we give lower bounds on the same order.
\end{abstract}

\ifnum\arxiv=0  
\category{J.4}{Social and Behavioral Sciences}{Economics}
\category{I.2.6}{Artificial Intelligence}{Learning}
\keywords{machine learning; online learning; mechanisms; data procurement}
\begin{bottomstuff}
Addresses: jabernet@umich.edu, yiling@seas.harvard.edu, cjho@ucla.edu, bwaggoner@fas.harvard.edu.
\end{bottomstuff}
\maketitle
\fi

\section{Introduction}
\label{sec:intro}

The rising interest in the field of Machine Learning (ML) has been strongly driven by the potential to generate economic value.
Firms seeking revenue optimizations can gather abundant data at low cost, apply a set of inexpensive algorithmic tools, and produce high-accuracy predictors that can massively improve future decision making.
The extent of the potential value that can be created by leveraging data for prediction is apparent in the multi-million dollar competition bounties offered by companies like Netflix and the Heritage Health Foundation, but perhaps even more so in the aggressive hiring of many ML experts by companies like Google and Facebook.

Much of the theoretical results in ML aim to measure, at least implicitly, the \emph{economic efficiency} of learning problems.
For example, in certain settings we have a reasonably thorough understanding of \emph{sample complexity} \cite{anthony2009neural} which gives us the precise tradeoff between $n$, the quantity of data at our disposal, and the error or loss rate we want to achieve.
Reducing error is always beneficial, of course, but must be weighed against the marginal cost of increasing $n$.

The measures of efficiency in ML have broadened in recent years, in particular because gathering data is typically orders of magnitude cheaper than labeling it.
This has led to the emergence of the \emph{active learning} paradigm \cite{balcan2010true,IWAL09,hanneke2009theoretical,settles2011theories,BBL06}.
Here, we imagine an interface between the learner and the label provider, where the learner may make label queries on data points in an online fashion.
By sequentially choosing which data to label, the learner can greatly reduce the number of labels required to learn~\cite{hanneke2009theoretical}.

A problem that has received little attention in the learning theory literature is the \emph{monetary efficiency} of learning when data have differing costs.
Indeed, real-world prediction tasks often require obtaining examples held by self-interested, strategic agents; these agents must be incentivized to provide the data they hold, and they have heterogeneous costs for doing so.

In this vein, the present paper seeks to address the following question:
\begin{quotation}
 In a world where data is held by self-interested agents with heterogeneous costs for providing it, and in particular when these costs may be arbitrarily correlated with the underlying data, how can we design mechanisms that are incentive-compatible, have robust learning guarantees, and optimize the cost-efficiency tradeoffs inherent in the learning problem?
\end{quotation}

This question is relevant to many real-world scenarios involving financial and strategic considerations in data procurement.
Here are two examples:
\begin{enumerate}
	\item In the development of a certain drug, a pharmaceutical company wishes to train a disease classifier based on data obtained by hospitals and stored in patients' medical records.
	These data are not public, yet the company can offer hospital patients financial incentives to contribute their private records.
  We note the potential for cost heterogeneity: the compensation required by patients may be correlated with the content of their medical data (\emph{e.g.} if they have the disease).
	\item Online retailers generally hope to know more about website visitors in order to better target products to customers.
	A retailer can offer to buy customers' demographic and social data, say in the form of access to their Facebook profile.
	But again, customers' willingness to sell may covary with their demographics data in an unknown way.
\end{enumerate}


\subsection*{From sample complexity to budget efficiency}
The classical problem in statistical learning theory is the following. We are given $n$ datapoints (examples) $z_1, \ldots, z_n \in \Z$ sampled from some distribution $\D$. Our goal is to select a hypothesis $h \in \H$ which ``performs well'' on unseen data from $\D$. We can specify performance in terms of a \emph{loss function} $\loss(h,z)$, and we write $\err(h)$, known as the \emph{risk} of $h$, to be the expectation of $\loss(h,z)$ on a random draw $z$ from $\D$. The goal is to produce a hypothesis $\bar{h}$ whose risk is not much more than that of $h^*$, the optimal member of $\H$. For example, in \emph{binary classification}, each data point consists of a pair $z = (x,y)$ where $x$ encodes some ``features'' and $y \in \{-1,1\}$ is the label; a hypothesis $h$ is a function that predicts a label for a given set of features; and a typical loss function, the ``0-1 loss'', is defined so that $\loss(h,(x,y)) = 0$ when $h(x) = y$ and $\loss(h,(x,y)) = 1$ otherwise.

Research in statistical learning theory attempts to characterize how well such tasks can be performed in terms of the \emph{resources} available and the inherent \emph{difficulty} of the problem. The resource is usually the quantity of data $n$. In binary classification, for instance, the difficulty or richness of the problem is captured by the ``VC-dimension'' $d$, and a famous result \citep{vapnik2000nature} is that there is an algorithm achieving the bound
\begin{equation}
	\err(\bar h) \leq \err(h^*) + O\left( \sqrt{\frac{d \log n}{n}} \right),  \label{eqn:traditional-risk}
\end{equation}
with very high probability over the sample $z_1, \ldots, z_n$.

In the present work we consider an alternative scenario: the learner has a fixed budget $B$ and can use this budget to purchase examples.
More precisely, on round $t$ of a sequence of $T$ rounds, agent $t$ arrives with data point $z_t$, sampled i.i.d. from some $\D$, and a cost $c_t \in [0,1]$. 
This cost $c_t$ is known only to the agent and can depend arbitrarily on $z_t$.
The learning mechanism may offer a (possibly randomized) menu of \emph{take-it-or-leave-it} prices $\pi_t : \Z \to \R_+$, with a possibly different price $\pi_t(z)$ for each data point $z$. The arriving agent observes the price $\pi_t(z_t)$ offered for her data and accepts as long as $\pi_t(z_t) \geq c_t$, in which case the mechanism pays the agent $\pi_t(z_t)$ and learns $(c_t, z_t)$.\footnote{We will discuss the interaction model further in Sections \ref{sec:stat-learning} and \ref{sec:conclusion}.} 
Our goal is to actively select prices to offer for different datapoints, subject to a budget $B$, in order to minimize the risk of our final output $\bar{h}$.

At a high level, our main result parallels the classical statistical learning guarantee in \eqref{eqn:traditional-risk}, but where the limited resource is the budget $B$ instead of the sample size $n$.
\begin{mainresult}[Informal] \label{mainresult}
For a large class of problems, there is an active data purchasing algorithm $\A$ that spends at most $B$ in expectation and outputs a hypothesis $\bar{h}$ satisfying,
 \[ \E \err(\bar{h}) ~ \leq ~ \err(h^*) + O\left(\sqrt{\frac{\algcost}{B}}\right) ,  \]
where $\algcost \in [0,1]$ is an algorithm-dependent parameter of the (cost, data) sequence capturing the \emph{monetary difficulty of learning} and the expectation is over the algorithm's internal randomness. 
\end{mainresult}
This bound depends on the quantity $\algcost$ which captures the \emph{monetary difficulty} of the problem at hand.
(We also need as prior knowledge a rough estimate of $\algcost$.)
This is in rough analogy with VC-dimension in classical bounds such as Equation \ref{eqn:traditional-risk}.
Similarly, the key resource constraint is now the budget $B$ rather than the quantity of data $n$.

It is important to note that $\algcost$ depends on the choice of algorithm $\A$.
However, our results also include simpler, algorithm-independent bounds.
For instance, replace $\algcost$ by $\sqrt{\mu}$, where $\mu$ is the mean of the arriving costs, and Main Result \ref{mainresult} continues to hold (and the only prior knowledge required is a rough estimate of $\mu$).
But $\algcost$ can be significantly smaller than $\sqrt{\mu}$ when there are particular correlations between the costs and the examples; indeed, we can have $\algcost \to 0$ even as $\mu$ stays constant.
This indicates a case in which the average cost of data is high, but due to beneficial correlations between costs and data, our mechanism can obtain all the data it needs for good learning very cheaply.
We give a thorough discussion of $\algcost$ in Section \ref{sec:gamma_interpretation}.

\subsection*{Overview of Techniques}


\begin{figure}
\includegraphics[width=\textwidth]{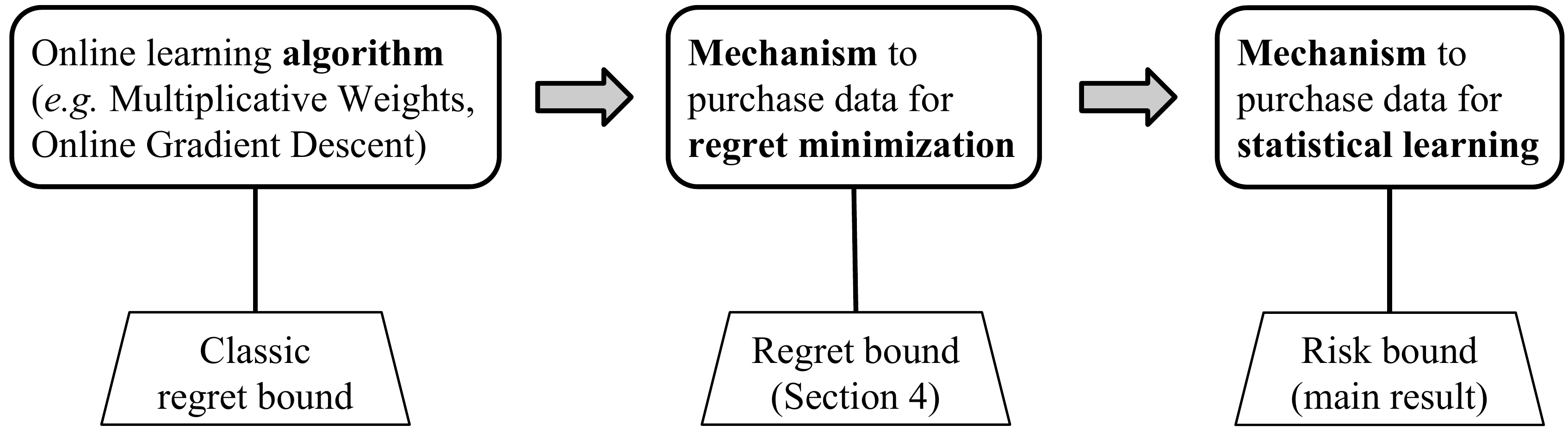}
\caption{\textbf{Algorithmic and analytic approach.}
  \small
  First, we convert Follow-the-Regularized-Leader online no-regret algorithms into mechanisms that purchase data for a regret-minimization setting that we introduce for purposes of analysis.
  Then, we convert these into mechanisms to solve our main problem, statistical learning. The mechanisms interact with the online learning algorithms as black boxes, but the analysis relies on ``opening the box''.}
\label{figure:conversions}
\end{figure}

Our general idea for attacking this problem is to utilize online learning algorithms (OLAs) for regret minimization~\cite{cesa2006prediction}.
These algorithms output a hypothesis or prediction at each step $t=1,\dots,T$, and their performance is measured by the summed loss of these predictions over all the steps.
The idea is that the hypotheses produced by the OLA at each step can be used both to determine the value of data during the procurement process and to generate a final prediction.

In Section \ref{sec:noregret}, we lay out the tools we need for a pricing and learning mechanism to interact with OLAs.
The first high-level problem is that, because of the budget constraint, our OLA will only see a small subset of the data sequence. We use the tool of \emph{importance-weighting} to give good regret-minimization guarantees even when we do not see the entire data sequence.
The second problem is how to aggregate the hypotheses of the OLA and convert its regret guarantee into a risk guarantee for our statistical learning setting. This is achieved with the standard ``online-to-batch'' conversion~\citep{cesa2004generalization}.

Given the tools of Section \ref{sec:noregret}, the key remaining challenge is to develop a pricing and learning strategy that achieves low regret.
We address this question in Section \ref{sec:noregret-strategic}.
We formally define a model of online learning for regret minimization with purchased data, in which the mechanism must output a hypothesis at each time step and perform well in hindsight against the entire data sequence, but only has enough budget to purchase and observe a fraction of the arriving data.
We defer until later our detailed analysis of this setting, derivation of a pricing strategy, and lower bounds.
At this point, we present our pricing strategy and regret guarantees for this setting.

In Section \ref{sec:tieback}, we give our main results: risk guarantees for a learner with budget $B$ and access to $T$ arriving agents.
These bounds follow directly by using the tools in Section \ref{sec:noregret} and regret-minimization results in Section \ref{sec:noregret-strategic}.

In Section \ref{sec:noregret-nonstrategic}, we develop a deeper understanding of the regret minimization setting.
We derive our pricing strategy from an in-depth analysis of a more analytically tractable variant of the problem, the ``at-cost'' setting, where the mechanism is only required to pay the cost of the arriving data point rather than the price posted.
For this setting, we are able to derive the optimal pricing strategy for minimizing the regret bound of our class of learning algorithms subject to an expected budget constraint.

We also complement our upper bounds by proving lower bounds for data-purchasing regret minimization.
These show that our mechanisms for the easier at-cost setting have an order-optimal regret guarantee of $\frac{T}{\sqrt{B}}\algcost$.
There is a small gap to our mechanisms for the main regret minimization setting, in which our guarantee is on the order of $\frac{T}{\sqrt{B}}\sqrt{\algcost}$ (recall that $\algcost \in [0,1]$, so this is a weaker guarantee).
The dependence $T/\sqrt{B}$ approaches the classic $\sqrt{T}$ regret bound when $B$ is large (approaching $T$).
When $B$ is small but still superconstant, we observe the perhaps counterintuitive fact that we can achieve $o(1)$ average regret per arrival while only observing an $o(1)$ fraction of the arriving data; in other words, we have ``no data, no regret.''

\ifnum\arxiv=0  
All proofs appear in the full version of the paper.
\fi  

\subsection*{Related Work}

For ``batch'' settings in which all agents are offered a price simultaneously, pricing schemes for obtaining data have appeared in recent work, especially \citet{RothSchoenebeck12}, which considered the design of mechanisms for efficient estimation of a statistic.
However, this work and others in related settings \citep{LR12, GR11, cummings2015accuracy} consider offline solutions, \emph{e.g.} drawing a posted price independently for all data points.
We focus on an \emph{active} approach in which the marginal value of individual examples is estimated according to the current learning progress and budget.
A data-dependent approach to pricing data does appear in \citet{HIM14}, but that paper focuses on a quite different learning setting, a model of regression with noisy samples with a budget-feasible mechanism design approach.

Another difference from the above papers is that we prove risk and regret bounds rather than trying to minimize \emph{e.g.} a variance bound, and we also consider a broader class of learning problems.

\paragraph{Other related work.} Other works such as \citet{ghosh2014buying, dekel2008incentive, meir2012algorithms} focus on a setting in which agents may misreport their data (also see the peer-prediction literature).
We suppose that agents may misreport their costs but not their data.

Many of the ideas in the present work draw from recent advances in using importance weighting for the active learning problem \citep{IWAL09}. There is a wealth of theoretical research into active learning, including \citet{BHLZ10,BBL06,hanneke2009theoretical} and many others.

``Budgeted Learning'' is a somewhat related area of machine learning, but there the budget is not \emph{monetary}.
The idea is that we do not see all of the features of the data points in our set, but rather have a ``budget'' of the number of features we may observe (for instance, we may choose any two of the three features height, weight, age).

\section{Statistical Learning with Purchased Data}
\label{sec:stat-learning}
In this section, we formally define the problem setting.
The body of the paper will then consist of a series of steps for deriving mechanisms for this setting with provable guarantees, which will finally appear in Section \ref{sec:tieback}.

We consider a statistical learning problem described as follows.
Our data points are objects $z \in \Z$.
We are given a hypothesis class $\H$ which we will assume is parameterized by vectors $\R^d$ but more broadly can be any Hilbert space endowed with a norm $\| \cdot \|$; for convenience we will treat elements $h \in \H$ as vectors which can be added, scaled, etc.
We are also given a loss function $\loss : \H \times \Z \to \R$ that is convex in $\H$.
We assume throughout the paper that the loss function is \emph{1-Lipschitz} in $h$; that is, for any $z \in \Z$ and any $h,h' \in \H$ we have $|\loss(h,z) - \loss(h',z)| \leq \| h - h' \|$.

In many common scenarios, $\Z$ is the space of pairs $(x,y)$ from the cross product $\X \times \Y$, with $x$ the feature input and $y$ the label, though in our setting $\Z$ can be a more generic object.
For example, in the canonical problem of \emph{linear regression}, we have that $\Z = \X \times \Y = \R^d \times \R$, the hypothesis class is vectors $\H = \R^d$, and the loss function is defined according to squared error $\loss(h,(x,y)) := (h^\top x - y)^2$.

The \textbf{data-purchasing statistical learning problem} is parameterized by the data space $\Z$, hypothesis space $\H$, loss function $\ell$, number of arriving data points $T$, and expected budget constraint $B$.
A \emph{problem instance} consists of a distribution $\D$ on the set $\Z$ and a sequence of pairs $(c_1,z_1), \dots, (c_T, z_T)$ where each $z_t$ is a data point drawn i.i.d. according to $\D$ and each $c_t \in [0,1]$ is the private cost associated with that data point.
The costs may be arbitrarily chosen, \emph{i.e.} we consider a worst-case model of costs.
(For instance, if costs and data are drawn together from a joint, correlated distribution, then this is a special case of our setting.)

In this problem, the task is to design a \emph{mechanism} implementing the operations ``post'', ``receive'', and ``predict'' and interacting with the problem instance as follows.
\begin{itemize}
  \item For each time step $t=1,\dots,T$:
  \begin{enumerate}
    \item The mechanism \emph{posts} a pricing function $\pi_t: \Z \to \R$, where $\pi_t(z)$ is the price posted for data point $z$.
    \item Agent $t$ arrives, possessing $(c_t,z_t)$.
    \item If the posted price $\pi_t(z_t) \geq c_t$, then agent $t$ accepts the transaction: The mechanism pays $\pi_t(z_t)$ to the agent and \emph{receives} $(c_t, z_t)$. If $\pi_t(z_t) < c_t$, agent $t$ rejects the transaction and the mechanism \emph{receives} a null signal.
  \end{enumerate}
  \item The mechanism outputs a \emph{prediction} $\bar{h} \in \H$.
\end{itemize}
Note that the mechanism is given the parameters $\Z$, $\H$, $\ell$, $T$, and $B$, but the problem instance is completely unknown to the mechanism prior to to the arrivals.
The design problem of the mechanism is how to choose the pricing function $\pi_t$ to \emph{post} at each time, how to update based on \emph{receiving} data, and how to choose the final \emph{prediction}.
The \emph{risk} or predictive error of a hypothesis is
 \[ \err(h) = \E_{z \sim \D} \loss(h,z)  \]
and the goal of the mechanism is to minimize the risk $\err(\bar{h})$ of its final hypothesis $\bar{h}$.
The benchmark is the optimal hypothesis in the class, $h^* = \arg\min_{h\in\H} \err(h)$.

The mechanism must guarantee that, for every input sequence $(c_1,z_1),\dots,(c_T,z_T)$, it spends at most $B$ in expectation over its own internal randomness.



\Paragraph{Agent-mechanism interaction}
The model of agent arrival and posted prices contains several assumptions.
First, agents cannot fabricate data; they can only report data they actually have to the mechanism.
Second, agents are rational in that they accept a posted price when it is higher than their cost and reject otherwise.
Third, we have an implementation of the mechanism that can obtain the agent's cost $c_t$ when the transaction occurs.

We emphasize that the purpose of this paper is not focused on the implementation of such a setting, but instead on developing active learning and pricing techniques and guarantees.
This is also intended as a simple and clean model in which to begin developing such techniques.
However, we briefly note some possible implementations.

In the most straightforward one, the mechanism posts prices directly to the agent who responds directly.
This would be a weakly truthful implementation, as agents have no incentive to misreport costs after they choose to accept the transaction.

One strictly truthful implementation uses a \emph{trusted third party} (TTP) that can facilitate the transactions (and guarantee the validity of the data if necessary).
For example, we could imagine attempting to learn to classify a disease, and we could rely on a hospital to act as the broker allowing us to negotiate with patients for their data.
Then the TTP/agent interaction could proceed as follows:
\begin{enumerate}
	\item Learning mechanism submits the pricing function $\pi_t$ to the TTP;
	\item Agent provides his data point $z_t$ and cost $c_t$ to the TTP;
	\item TTP determines whether $\pi_t(z_t) \geq c_t$ and, if so, instructs the learner to pay $\pi_t(z_t)$ to the agent and then provides the pair $(z_t,c_t)$ to the learner.
\end{enumerate}
Other possibilities for strictly truthful implementation include using a bit of cryptography (see Section \ref{sec:conclusion}).


\section{Tools for Converting Regret-Minimizing Algorithms}
\label{sec:noregret}
In this section we begin with the classic regret-minimization problem and a broad class of algorithms for this problem.
We then show how to apply techniques that convert these algorithms into a form that will be useful for solving the statistical learning problem with purchased data.
The only missing ingredient will then be a price-posting strategy, which will be presented in Section \ref{sec:noregret-strategic}.

\subsection{Recap of Classic Regret-Minimization}
In the classic regret-minimization problem, we have a hypothesis class $\H$ with the same assumptions as stated in Section \ref{sec:stat-learning}.
At each time $t=1,\dots,T$ the algorithm posts a hypothesis $h_t \in \H$.
Nature (the adversary, the environment, etc.) selects a $1$-Lipschitz convex loss function $f_t: \H \to \R$.\footnote{This definition of ``loss function'' is a departure from our main setting which involved $\loss(\cdot,\cdot)$. But we will use this somewhat more general setup by choosing $f_t(h) \propto \loss(h,z_t)$ for the datapoint $z_t$.}
The algorithm observes $f_t$ and suffers loss  $f_t(h_t)$.

The \emph{loss} and \emph{regret} of the algorithm on this particular input sequence are
\begin{align}
 &\text{Loss}_T = \sum_{t=1}^T f_t(h_t) . \label{eqn:define-loss} \\
 &\text{Regret}_T = \text{Loss}_T - \min_{h^*\in\H} \sum_{t=1}^T f_t(h^*) . \label{eqn:define-regret}
\end{align}
The regret objective is what one typically studies in adversarial settings, where we want to discount the loss incurred by the algorithm by the loss suffered by the best possible $h^*$ chosen with knowledge of the sequence of $f_t$'s. As we often consider randomized algorithms, we will generally consider \emph{expected} loss and regret, where the expectation is over any randomness in the algorithm not over the (possibly-randomized) input sequence of loss functions. An algorithm is said to guarantee regret $R(T)$ if the latter provides an upper bound on regret for every sequence of loss functions $f_1,\dots,f_T$.

We utilize the broad class of \emph{Follow-the-Regularized-Leader} (FTRL) online algorithms (Algorithm \ref{alg:ftrl}) \cite{zinkevich2003online,Shai12}.
Special cases of FTRL include Online Gradient Descent, Multiplicative Weights, and others.
Each FTRL algorithm is specified by a convex function $G : \H \to \R$ which is known as a \emph{regularizer} and is usually strongly convex with respect to some norm.
For example, Multiplicative Weights follows by using the negative entropy function as a regularizer, which is strongly-convex with respect to $\ell_1$ norm \citep{cesa2006prediction}.
Online Gradient Descent follows by using the regularizer $G(h) = \frac{1}{2}\|h\|_2^2$, which is strongly-convex with respect to $\ell_2$ norm.
These special cases have efficient closed-form solutions to the update rule for computing $h_{t+1}$.

\begin{algorithm}
 \KwIn{learning parameter $\eta$, convex regularizer $G: \H \to \R$}
 \For{$t = 1, \dots, T$}{
  post hypothesis $h_t$, observe loss function $f_t$\;
  update $\textstyle h_{t+1} = \inf_{h\in\H} \Big\{  \sum_{t'\leq t} f_{t'}(h) ~ + ~ \eta G(h)\Big\} $ 
  \;
 }
\caption{Follow-the-Regularized-Leader (FTRL).}
\label{alg:ftrl}
\end{algorithm}
It is well-known (and indeed follows as a special case of Lemma \ref{lemma:ftrl-iwtd-regret}) that, under the assumptions on our setting, FTRL algorithms guarantee an expected regret bound of $O(\sqrt{T})$, and this is tight with respect to $T$.

\subsection{Importance-Weighting Technique for Less Data} \label{subsec:importance-weighting}
As a starting point, suppose we wish to design an online learning algorithm that does not observe all of the arriving loss functions, but still performs well against the entire arrival sequence.

Because the arrival sequence may be adversarially chosen, a good algorithm should randomly choose to sample some of the arrivals.
In this section, we abstract away the decision of how to randomly sample.
(This will be the focus of Section \ref{sec:noregret-strategic}.)
In this section, we suppose that at each time $t$, after posting a hypothesis $h_t$, a probability $q_t > 0$ is specified by some external means as a (possibly random) function of the preceding time steps.
With probability $q_t$, we observe $f_t$; with probability $1-q_t$, we observe nil.

Our goal is to modify the FTRL algorithm for this setting and obtain a modified regret guarantee.
Notice crucially that the definition of loss and regret (\ref{eqn:define-regret}) are unchanged: We still suffer the loss $f_t(h_t)$ regardless of whether we observe $f_t$.

The key technique we use is \emph{importance weighting}.
The idea is that, if we only observe each of a sequence of values $x_i$ with probability $p_i$, then we can get an unbiased estimate of their sum by taking the sum of $\frac{x_i}{p_i}$ for those we do observe. To check this fact, let $\1_i$ be the indicator variable for the event that we observe $i$ and note that the expectation of our sum is $\E\left[ \sum_i \1_i \frac{x_i}{p_i} \right] = \sum_i x_i$.
This is called importance-weighting the observations (and is a specific instance of a more general machine learning technique).
Furthermore, if each $\frac{x_i}{p_i}$ is bounded and observed independently, we can expect the estimate to be quite good via tail bounds.

The importance-weighted modification to an online learning algorithm is outlined in Algorithm \ref{alg:iwtd-online}.
The importance-weighted regret guarantee we obtain is given in Lemma \ref{lemma:ftrl-iwtd-regret}.
It depends on the following key notation.
Our analysis and algorithm require a given norm $\| \cdot \|$, and we recall the definition of the dual norm $\| z \|_{\star} := \sup_{x; \|x \| \leq 1} x \cdot z$.
\begin{definition} \label{def:delta}
Given $h \in H$, and convex loss $f : \H \to \mathbb{R}$, let $\Delta_{h,f} := \|\nabla f(h) \|_{\star}$.
\end{definition}
We can informally think of $\Delta_{h,f}$ both as the ``difficulty'' of arrival $f$ when the current hypothesis is $h$, and as the ``value'' of observing $f$.
This interpretation is explored in Section \ref{sec:noregret-strategic} when we define the parameter $\algcost$.

\begin{algorithm}[h]
 \KwIn{access to Online Learning Algorithm (OLA)}
 \For{$t = 1, \dots, T$}{
  post hypothesis $h_t$ $\leftarrow$ OLA; observe sampling probability $q_t$\;
  toss $q_t$-weighted coin (Bernoulli sample) $\epsilon_t$ \;
  $
  \text{ if } \epsilon_t = 
  \begin{cases}
    1 & \text{input importance-weighted loss function} \quad \hat{f}_t(\cdot) = \frac{f_t(\cdot)}{q_t} \rightarrow  \text{OLA} \\
    0 & \text{input zero function} \quad \hat{f}_t(\cdot) \equiv 0 \rightarrow \text{OLA}
  \end{cases}$
 }
\caption{Importance-Weighted Online Learning Algorithm.}
\label{alg:iwtd-online}
\end{algorithm}

\def\lemmaftrliwtdregret{  
 Assume we implement Algorithm \ref{alg:iwtd-online} with nonzero sampling probabilities $q_1,\dots,q_T$. Assume the underlying OLA is FTRL (Algorithm~\ref{alg:ftrl}) with regularizer $G : \H \to \R$ that is strongly convex with respect to $\| \cdot \|$. Then the expected regret, with respect to the loss sequence $f_1, \ldots, f_T$, is no more than
  \[ R(T) = \frac{\beta}{\eta} + 2 \eta \E \left[ {\textstyle \sum_{t=1}^T \frac{\Delta_{h_t,f_t}^2}{q_t} }\right], \]
 where $\beta$ is a constant depending on $\H$ and $G$, $\eta$ is a parameter of the algorithm, and the expectation is over any randomness in the choices of $h_t$ and $q_t$.
} 
\begin{lemma} \label{lemma:ftrl-iwtd-regret}
\lemmaftrliwtdregret
\end{lemma}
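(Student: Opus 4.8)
The plan is to exploit the fact that the wrapped FTRL algorithm never actually sees the true losses: at round $t$ it is fed the random surrogate $\hat f_t$, equal to $f_t/q_t$ with probability $q_t$ and to the zero function otherwise. So I would first apply the \emph{deterministic} FTRL regret guarantee to the \emph{realized} surrogate sequence $\hat f_1,\dots,\hat f_T$, and only afterwards take expectations to reintroduce the true losses. The standard ``follow-the-leader / be-the-leader'' lemma, combined with the strong convexity of the regularizer $G$ and the learning parameter $\eta$, produces a deterministic bound that splits into a regularization term and a gradient-based stability term,
$$ \sum_{t=1}^T \hat f_t(h_t) - \min_{h^*\in\H}\sum_{t=1}^T \hat f_t(h^*) ~\le~ \frac{\beta}{\eta} + 2\eta \sum_{t=1}^T \big\| \nabla \hat f_t(h_t)\big\|_{\star}^2, $$
where $\beta$ collects the range of $G$ over $\H$; because the stability term depends only on gradient norms, the bound is insensitive to the fact that $\hat f_t$ is merely $(1/q_t)$-Lipschitz rather than $1$-Lipschitz. (The advertised $O(\sqrt T)$ FTRL bound is recovered in the special case $q_t\equiv 1$ after optimizing $\eta$.)

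Next I would convert this surrogate regret into true regret by taking expectations. Let $\mathcal{F}_t$ denote all randomness up to just before the coin toss $\epsilon_t$, so that $h_t$, $q_t$, and $f_t$ are $\mathcal{F}_t$-measurable. The crucial unbiasedness fact is that for any $\mathcal{F}_t$-measurable $h$ we have $\E[\hat f_t(h)\mid \mathcal{F}_t] = q_t\cdot \frac{f_t(h)}{q_t} + (1-q_t)\cdot 0 = f_t(h)$; applying this with $h=h_t$ and using the tower rule gives $\E\sum_t \hat f_t(h_t) = \E\sum_t f_t(h_t)$. For the comparator I would use Jensen's inequality in the favorable direction: since a minimum of expectations dominates the expectation of the minimum, $\E\big[\min_{h^*}\sum_t\hat f_t(h^*)\big]\le \min_{h^*}\E\sum_t\hat f_t(h^*) = \min_{h^*}\sum_t f_t(h^*)$. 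Subtracting, the expected surrogate regret upper-bounds the expected true regret $\E[\text{Regret}_T]$.

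It then remains to re-express the expected stability term. Conditioning on $\mathcal{F}_t$ and averaging over $\epsilon_t$ alone,
$$ \E\!\left[\big\|\nabla \hat f_t(h_t)\big\|_{\star}^2 \,\middle|\, \mathcal{F}_t\right] = q_t\left\|\frac{\nabla f_t(h_t)}{q_t}\right\|_{\star}^2 = \frac{\|\nabla f_t(h_t)\|_{\star}^2}{q_t} = \frac{\Delta_{h_t,f_t}^2}{q_t}, $$
where the single surviving term is weighted by $q_t$ against the $1/q_t^2$ blow-up of the importance weight; the tower rule then turns $\E\sum_t\|\nabla\hat f_t(h_t)\|_{\star}^2$ into $\E\sum_t \Delta_{h_t,f_t}^2/q_t$. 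Combining the three displays yields exactly $R(T)=\frac{\beta}{\eta}+2\eta\,\E\big[\sum_t \Delta_{h_t,f_t}^2/q_t\big]$.

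The main obstacle is the bookkeeping of the randomness rather than any single inequality: I must be careful that $h_t$ and $q_t$ are committed \emph{before} the coin $\epsilon_t$ is flipped, so that they act as constants under the conditional expectation used both for unbiasedness and for the second-moment computation, and that these per-round conditional identities compose correctly across the adaptively chosen, random sequence $(h_t,q_t)$ via the tower property. Once this filtration structure is pinned down, the deterministic FTRL bound and the Jensen step for the comparator are routine.
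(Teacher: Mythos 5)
Your proposal is correct and follows essentially the same route as the paper's proof: view the wrapper as deterministic FTRL run on the realized surrogate sequence $\hat f_1,\dots,\hat f_T$, invoke the gradient-norm form of the FTRL bound, and then take expectations using the conditional unbiasedness $\E[\hat f_t(h)\mid \mathcal F_t]=f_t(h)$ and the second-moment identity $\E[\|\nabla\hat f_t(h_t)\|_\star^2\mid\mathcal F_t]=\Delta_{h_t,f_t}^2/q_t$. Your Jensen step for the comparator is just a repackaging of the paper's device of bounding against a fixed reference hypothesis before taking expectations and then specializing to $h^*$, so the two arguments coincide.
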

We can recover the classic regret bound as follows: Take each $q_t = 1$, and note by the Lipschitz assumption that each $\Delta_{h_t,f_t} \leq 1$. Then by setting $\eta = \Theta(1/\sqrt{T})$, we get an expected regret bounded by $O(\sqrt{T})$.

\subsection{The ``Online-to-Batch'' Conversion}
So far so good: We can convert an online regret-minimization algorithm to use smaller amounts of data, and we postpone the question of how to price data till Section \ref{sec:noregret-strategic}. We now address the statistical learning problem, which is how to generate accurate \emph{predictions} based on the online learning process.

We address this with a standard tool known as the ``online-to-batch conversion,'' where we may leverage an online learning algorithm for use in a ``batch'' setting.
A sketch of this technique is as follows, and further details can be found in, e.g., \citet{Shai12}. Given a batch of i.i.d. data points, feed them one-by-one into the no-regret algorithm.
Because the algorithm has low regret, its hypotheses predicted well on average.
But since each data point was drawn i.i.d., this means that these hypotheses on average predict well on an i.i.d. draw from the distribution.
Thus it suffices to take the mean of the hypotheses to obtain low risk.

\begin{lemma}[Online-to-Batch \cite{cesa2004generalization}] \label{lemma:online-to-batch}
Suppose the sequence of convex loss functions $f_1,\dots,f_T$ are drawn i.i.d. from a distribution $\F$ and that an online learning algorithm with hypotheses $h_1,\dots,h_T$ achieves expected regret $R(T)$.
Let $\err(h) = \E_{f\sim\F} f(h)$ and $h^* = \arg\min_{h\in\H} \err(h)$.
For $\bar{h}_{1:T} = \frac{1}{T} \sum_{t=1}^T h_t$, we have
 \[ \E_{\substack{f_1,\dots,f_T, \\ \text{alg}}} \err(\bar{h}_{1:T}) \leq \err(h^*) ~ + ~ \frac{1}{T} R(T) .  \]
\end{lemma}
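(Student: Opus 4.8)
The plan is to run the standard three-ingredient online-to-batch argument: convexity (Jensen) to pass from the averaged hypothesis to the average of the per-step risks, the non-anticipating/independence structure of the online algorithm to convert realized losses into risks in expectation, and finally the definition of regret together with the optimality of $h^*$.

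First I would invoke convexity. Since each $f \sim \F$ is convex and $\err(h) = \E_{f\sim\F} f(h)$ is an expectation of convex functions, $\err$ is itself convex on $\H$. Applying Jensen's inequality to $\bar{h}_{1:T} = \frac{1}{T}\sum_{t=1}^T h_t$ gives $\err(\bar{h}_{1:T}) \le \frac{1}{T}\sum_{t=1}^T \err(h_t)$ pointwise, hence also in expectation. This reduces the claim to bounding $\E\big[\sum_t \err(h_t)\big]$.

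The crucial step---and the one that makes the whole conversion work---is the observation that the online algorithm produces $h_t$ \emph{before} seeing $f_t$, so $h_t$ is a (possibly randomized) function of $f_1,\dots,f_{t-1}$ and of the algorithm's internal coins, all of which are independent of the fresh draw $f_t$. Conditioning on $f_1,\dots,f_{t-1}$ (and the internal randomness) fixes $h_t$, and since $f_t$ is an independent draw from $\F$, we get $\E[f_t(h_t) \mid f_1,\dots,f_{t-1}] = \err(h_t)$. Taking total expectation yields $\E[f_t(h_t)] = \E[\err(h_t)]$, and summing over $t$ gives $\E\big[\sum_t \err(h_t)\big] = \E[\text{Loss}_T]$, the expected cumulative online loss. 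This independence is exactly where the i.i.d.\ assumption and the non-anticipating structure of the algorithm are needed; getting the conditioning right is the only subtle point, and it is the main obstacle in the sense that dropping it invalidates the identity (a learner that peeks at $f_t$ when forming $h_t$ could make $\E[f_t(h_t)]$ arbitrarily smaller than $\E[\err(h_t)]$).

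It then remains to bound $\E[\text{Loss}_T]$. By the definition of regret, $\text{Loss}_T = \text{Regret}_T + \min_{h\in\H}\sum_t f_t(h) \le \text{Regret}_T + \sum_t f_t(h^*)$, where the inequality merely uses that $h^*$ is one admissible competitor. Taking expectations, the regret term contributes at most $R(T)$ by hypothesis, while for the comparator term I would note that $h^*$ is a fixed (deterministic) hypothesis depending only on $\F$, so $\E[f_t(h^*)] = \err(h^*)$ and $\E[\sum_t f_t(h^*)] = T\,\err(h^*)$. Hence $\E[\text{Loss}_T] \le R(T) + T\,\err(h^*)$. Chaining this with the Jensen bound and dividing by $T$ produces $\E[\err(\bar{h}_{1:T})] \le \err(h^*) + \frac{1}{T}R(T)$, as desired.
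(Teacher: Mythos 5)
Your proof is correct and follows exactly the standard argument that the paper itself only sketches (it cites \citet{cesa2004generalization} and gives no full proof): Jensen's inequality on the convex risk, the non-anticipating structure plus the i.i.d.\ draws to equate $\E[f_t(h_t)]$ with $\E[\err(h_t)]$, and the regret definition against the fixed comparator $h^*$. The conditioning step you flag as the subtle point is indeed the crux, and you handle it correctly.
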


We note that this conversion will continue to hold in the data-purchasing no-regret setting we define next, since all that is required is that the algorithm output a hypothesis $h_t$ at each step and that there is a regret bound on these hypotheses.

\section{Regret Minimization with Purchased Data}
\label{sec:noregret-strategic}
In this setting, we define the problem of regret minimization with purchased data.
We will design mechanisms with good regret guarantees for this problem, which will translate via the aforementioned online-to-batch conversion (Lemma \ref{lemma:online-to-batch}) into guarantees for our original problem of statistical prediction.

The essence of the data-purchasing no-regret learning setting is that an online algorithm (``mechanism'') is asked to perform well against a sequence of data, but by default, the mechanism does not have the ability to see the data.
Rather, the mechanism may purchase the right to observe data points using a limited budget.
The mechanism is still expected to have low regret compared to the optimal hypothesis in hindsight on the entire data sequence (even though it only observes a portion of the sequence).

\subsection{Problem Definition}
The \textbf{data-purchasing regret minimization problem} is parameterized by the hypothesis space $\H$, number of arriving data points $T$, and expected budget constraint $B$.
A \emph{problem instance} is a sequence of pairs $(c_1,f_1), \dots, (c_T, f_T)$ where each $f_t: \H \to \R$ is a convex loss function and each $c_t \in [0,1]$ is the cost associated with that data point.
We assume that the $f_t$ are $1$-Lipschitz, and let $\F$ be the set of such loss functions.

In this problem, we design a \emph{mechanism} implementing the operations ``post'' and ``receive'' and interacting with the problem instance as follows.
\begin{itemize}
  \item For each time step $t=1,\dots,T$:
  \begin{enumerate}
    \item The mechanism \emph{posts} a hypothesis $h_t$ and a pricing function $\pi_t: \F \to \R$, where $\pi_t(f)$ is the price posted for loss function $f$.
    \item Agent $t$ arrives, possessing $(c_t,f_t)$.
    \item If the posted price $\pi_t(f_t) \geq c_t$, then agent $t$ accepts the transaction: The mechanism pays $\pi_t(f_t)$ to the agent and \emph{receives} $(c_t, f_t)$. If $\pi_t(f_t) < c_t$, agent $t$ rejects the transaction and the mechanism \emph{receives} a null signal.
  \end{enumerate}
\end{itemize}
Note the key differences from the statistical learning setting: We must post a hypothesis $h_t$ at each time step (and we do not output a final prediction), and data is not assumed to come from a distribution.

The goal of the mechanism is to minimize the loss, namely $\sum_t f_t(h_t)$.
The definition of regret is also the same as in the classical setting (Equation \ref{eqn:define-regret}).
Note that we suffer a loss $f_t(h_t)$ at time $t$ regardless of whether we purchase $f_t$ or not.
The mechanism must also guarantee that, for every problem instance $(c_1,f_1),\dots,(c_T,f_T)$, it spends at most $B$ in expectation over its own internal randomness.

\subsection{The Importance-Weighting Framework}
Recall that, in Section \ref{subsec:importance-weighting}, we introduced the \emph{importance-weighting} technique for online learning. This gave regret guarantees for a learning algorithm when each arrival $f_t$ is observed with some probability $q_t$.

Our general approach will be to develop a strategy for randomly drawing posted prices $\pi_t$.
This will induce a probability $q_t$ of obtaining each arrival $f_t$.

Therefore, the entire problem has been reduced to choosing a posted-price strategy at each time step. This posted-price strategy should attempt to minimize the regret bound while satisfying the expected budget constraint.

A brief sketch of the proof arguments is as follows. After we choose a posted price strategy, each $q_t$ will be determined as a function of $h_t, c_t,$ and $f_t$.
($q_t$ is just equal to the probability that our randomly drawn price exceeds the agent's cost $c_t$.)
Thus, we can apply Lemma \ref{lemma:ftrl-iwtd-regret}, which stated that for these induced probabilities $q_t$, the expected regret of the learning algorithm is
 \[ \frac{\beta}{\eta} + 2 \eta \E \sum_t \frac{\Delta_{h_t,f_t}^2}{q_t}  , \]
where $\beta$ is a constant and $\eta$ is a parameter of the learning algorithm to be chosen later.

After we choose and apply such a strategy, the general approach to proving our regret bounds is to find an \emph{a priori} bound $M$ such that $2 \E \sum_t \frac{\Delta_{h_t,f_t}^2}{q_t} \leq M$.
Then the regret bound becomes $\frac{\beta}{\eta} + \eta M$.
If we know this upper-bound $M$ in advance using some prior knowledge, then we can choose $\eta = \Theta(1/\sqrt{M})$ as the parameter for our learning algorithms.
This gives a regret guarantee of $O(\sqrt{M})$.

\subsection{A First Step to Pricing: The ``At-Cost'' Variant} \label{subsec:noregret-at-cost}
The bulk of our analysis of the no-regret data-purchasing problem actually focuses on a slightly easier variant of the setting: If the arriving agent accepts the transaction, then the mechanism only has to pay the cost $c_t$ rather than the posted price $\pi_t(f_t)$.
We call this the ``at-cost'' variant of the problem.
This setting turns out to be much more analytically tractable: We derive optimal regret bounds for our mechanisms and matching lower bounds.
We then take the key approach and insights derived from this variant and apply them to produce a solution to the main no-regret data purchasing problem.
In order to keep the story moving forward, we summarize our results for the ``at-cost'' setting here and explore how they are obtained in Section \ref{sec:noregret-nonstrategic}.

In the at-cost setting, we are able to solve directly for the pricing strategy that minimizes the importance-weighted regret bound of Lemma \ref{lemma:ftrl-iwtd-regret}.
We first define one important quantity, then we state the strategy and result in Theorem \ref{theorem:noregret-nonstrategic}.
\begin{definition} \label{def:gamma}
For a fixed input sequence $(c_1,f_1),\dots,(c_T,f_T)$, $\Delta_{h,f}$ in Definition \ref{def:delta}, and a mechanism outputting (possibly random) hypotheses $h_1,\dots,h_T$, define
 \[  \algcost = \E \frac{1}{T} \sum_t \Delta_{h_t,f_t} \sqrt{c_t}  \]
where the expectation is over the randomness of the algorithm.
Note that $\algcost$ lies in $[0,1]$ by our assumptions on bounded cost and Lipschitz loss.
\end{definition}

Now we give the main result for the at-cost setting.

\def\theoremnoregretnonstrategic{ 
There is a mechanism for the ``at-cost'' problem of data purchasing for regret minimization that interfaces with FTRL and guarantees to meet the expected budget constraint, where for a parameter $\algcost \in [0,1]$ (Definition \ref{def:gamma}),
\begin{enumerate}
 \item The expected regret is bounded by $O\left(\max\left\{ \frac{T}{\sqrt{B}}\algcost ~,~ \sqrt{T} \right\}\right)$.
 \item This is optimal in that no mechanism can improve beyond constant factors.
 \item The pricing strategy is to choose a parameter $K = O\left(\frac{T}{B}\algcost\right)$ and draw $\pi_t(f)$ randomly according to a distribution such that $\Pr[\pi_t(f) \geq c] = \min\left\{ 1 ~,~ \frac{\Delta_{h_t,f}}{K \sqrt{c}} \right\}.$
\end{enumerate}
The only prior knowledge required is an estimate of $\algcost$ up to a constant factor.
} 
\begin{theorem} \label{theorem:noregret-nonstrategic}
\theoremnoregretnonstrategic
\end{theorem}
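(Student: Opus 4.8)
The plan is to establish the three claims in turn, handling the upper bound (parts 1 and 3) as a constrained optimization over the \emph{induced} sampling probabilities plugged into Lemma \ref{lemma:ftrl-iwtd-regret}, and the matching lower bound (part 2) via an information-theoretic two-point construction. The first observation is that, once a randomized posted price $\pi_t$ is fixed, the only quantity that feeds into the regret bound is the acceptance probability $q_t = \Pr[\pi_t(f_t) \geq c_t]$, and the only quantity relevant to the at-cost budget is the expected payment $\E\sum_t q_t c_t$. So I would treat the $q_t$ themselves as the decision variables.

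For part 3, I would minimize the regret-bound term $\E\sum_t \frac{\Delta_{h_t,f_t}^2}{q_t}$ subject to $\E\sum_t q_t c_t \leq B$ and $q_t\in(0,1]$. Ignoring the clipping, a Lagrangian argument (equivalently Cauchy--Schwarz, since $(\sum_t \Delta_{h_t,f_t}\sqrt{c_t})^2 \leq (\sum_t \frac{\Delta_{h_t,f_t}^2}{q_t})(\sum_t q_t c_t)$) gives the stationarity condition $q_t \propto \frac{\Delta_{h_t,f_t}}{\sqrt{c_t}}$ with equality in the bound, and reinstating $q_t\leq 1$ yields exactly $q_t = \min\{1, \frac{\Delta_{h_t,f_t}}{K\sqrt{c_t}}\}$ for a multiplier $K$. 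This both identifies the \emph{form} of the optimal rule and shows it minimizes the bound. To realize these acceptance probabilities by a genuine posted price, I note that $c\mapsto\min\{1,\frac{\Delta}{K\sqrt{c}}\}$ is nonincreasing, hence a valid complementary CDF; drawing $\pi_t(f)$ from the corresponding distribution produces acceptance probability exactly $q(c)$ at every (unknown) cost $c$.

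Next I would verify the budget and bound the regret. From $q_t\leq\frac{\Delta_{h_t,f_t}}{K\sqrt{c_t}}$ the expected payment is $\E\sum_t q_t c_t \leq \frac{1}{K}\E\sum_t \Delta_{h_t,f_t}\sqrt{c_t} = \frac{T\algcost}{K}$, so $K = \frac{T\algcost}{B}$ meets the constraint. For the regret term I split on the clipping: when $q_t=1$ the contribution is $\Delta_{h_t,f_t}^2\leq 1$, summing to at most $T$; when $q_t<1$ it equals $K\Delta_{h_t,f_t}\sqrt{c_t}$, summing in expectation to $KT\algcost$. Hence $M := 2\E\sum_t \frac{\Delta_{h_t,f_t}^2}{q_t} = O(T + \frac{T^2}{B}\algcost^2)$, and taking $\eta=\Theta(1/\sqrt{M})$ in Lemma \ref{lemma:ftrl-iwtd-regret} gives regret $O(\sqrt{M}) = O(\max\{\sqrt{T}, \frac{T}{\sqrt{B}}\algcost\})$, which is part 1. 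The degenerate case $\Delta_{h_t,f_t}=0$ (forcing $q_t=0$) is handled by the convention that such an arrival is never purchased and contributes nothing, consistent with the fact that a zero gradient leaves the FTRL update unchanged. I would also flag the mild circularity that $\algcost$ is defined through the hypotheses the mechanism itself produces, resolved by the standing assumption that a constant-factor estimate of $\algcost$ is available for setting $K$ and $\eta$, so all bounds hold up to constants.

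Finally, for the lower bound (part 2) I would exhibit a hard one-dimensional family: $\H=[-1,1]$, linear losses $f_t(h)=g_t h$ with $g_t\in\{-1,+1\}$ so that $\Delta_{h_t,f_t}=1$, and a common cost $c_t=c$, which forces $\algcost=\sqrt{c}$. Drawing the $g_t$ i.i.d.\ with a hidden bias $\pm\varepsilon$ makes $h^*=\mp 1$ and the per-step regret of guessing the wrong sign $\Theta(\varepsilon)$. Under budget $B$ the expected number of purchases is at most $B/c$, and each observed $g_t$ has KL divergence $\Theta(\varepsilon^2)$ between the two worlds, so by a standard two-point/Pinsker argument the worlds are indistinguishable unless $B/c\gtrsim 1/\varepsilon^2$. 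Choosing $\varepsilon=\Theta(\sqrt{c/B})$ keeps them confusable and forces expected regret $\Omega(\varepsilon T)=\Omega(\frac{T}{\sqrt{B}}\algcost)$ (this is precisely the regime where the first term of the max dominates $\sqrt{T}$); combining with the classical $\Omega(\sqrt{T})$ floor, which holds even under free full observation, yields the stated maximum. I expect this lower bound to be the main obstacle: the upper bound reduces to a clean Cauchy--Schwarz computation fed into Lemma \ref{lemma:ftrl-iwtd-regret}, whereas the lower bound requires carefully coupling the budget-limited number of observations to the statistical indistinguishability of the two worlds and checking that the construction realizes the intended value of $\algcost$.
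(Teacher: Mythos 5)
Your proposal is correct and follows essentially the same route as the paper: the upper bound optimizes the induced sampling probabilities $q_t$ in the bound of Lemma \ref{lemma:ftrl-iwtd-regret} subject to the expected-budget constraint (the paper does this via a Lagrangian in Lemma \ref{lemma:known-optimal-strategy}, equivalent to your Cauchy--Schwarz step), realizes them by a posted-price distribution, splits on the clipped arrivals to get $M = O(T + \tfrac{T^2}{B}\algcost^2)$, and tunes $\eta$. Your lower-bound instance (all points informative with uniform cost $c$, so $\algcost=\sqrt{c}$) differs superficially from the paper's construction, which dilutes cost-one coin flips with free useless points, but it is the same budget-limited coin-distinguishing argument and is in fact the construction the paper itself uses for its $\sqrt{\mu}$ lower bound in Theorem \ref{theorem:noregret-known-lower-gamma}.
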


\subsection{Interpreting the Quantity $\algcost$}
\label{sec:gamma_interpretation}

Several of our bounds rely heavily on the quantity $\algcost$ which measures, in a sense, the ``financial difficulty'' of the problem.
We now devote some discussion to understanding $\algcost$ by answering four questions.

\textit{(1) How to interpret $\algcost$?}

$\algcost$ is an average, over time steps $t$, of $\Delta_{h_t,f_t} \cdot \sqrt{c_t}$.
Here, $\Delta_{h_t,f_t}$ intuitively captures both the ``difficulty'' of the data $f_t$ and also the ``value'' or ``benefit'' of $f_t$.
To explain the difficulty aspect, by examining the regret bound for FTRL learning algorithms (\emph{e.g.} the importance-weighted regret bound of Lemma \ref{lemma:ftrl-iwtd-regret} with all $q_t=1$), one observes that if each $\Delta_{h_t,f_t}$ is small, then we have an excellent regret bound for our learning algorithm; the problem is ``easy''.
To explain the value aspect, one can for concreteness take the Online Gradient Descent algorithm; the larger the gradient, the larger the update at this step, and $\Delta_{h_t,f_t}$ is the norm of the gradient.
And in general, the higher $\Delta_{h_t,f_t}$, the more likely we are to purchase arrival $f_t$.

Thus, $\algcost$ captures the correlations between the value of the arriving data and the cost of that data.
If either the mean of the costs or the average benefit $\Delta_{h_t,f_t}$ of the data is converging to $0$, then $\algcost \to 0$ and in these cases we can learn with high accuracy very cheaply, as may be expected.
More interestingly, it is possible to have both high average costs, and high average data-values, and yet still have $\algcost \to 0$ due to beneficial correlations.
In these cases we can learn much more cheaply than might be expected based on either the economic side or the learning side alone.

\vspace{5pt}
\textit{(2) When should we expect to have good prior knowledge of $\algcost$?}

Although in general $\algcost$ will be domain-specific, there are several reasons for optimism.
First, $\algcost$ compresses all information about the data and costs into a single scalar parameter (compare to the common mechanism-design assumption that the prior distribution of agents' values is fully known).
Second, we do not need very exact estimates of $\algcost$ (\emph{e.g.} we do not need to know $\algcost \pm \epsilon$): For order-optimal regret bounds, we only need an estimate within a constant factor of $\algcost$.
Third, $\algcost$ is directly proportional to $K$, which is a normalization constant in our pricing distribution: If we increase $K$, the probability of obtaining a given data point only decreases, and vice versa.
In fact, the best choice of $K$ is the normalization constant so that we run out of budget precisely when the last arrival leaves.
Thus, $K$ (equivalently, $\algcost$) can be estimated and adjusted online by tracking the ``burn rate'' (spending per unit time) of the algorithm.
\ifnum\arxiv=1  
In simulations,
\else  
In simulations (appearing in the full version of the paper),
\fi  
we have observed success with a simple approach of estimating $K$ based on the average correlation so far along with the burn rate, \emph{i.e.} if the current estimated $\algcost$ is $\hat{\algcost}$ and there are $\hat{T}$ steps remaining with $\hat{B}$ budget remaining to spend, set $K = \hat{\algcost}\hat{T}/\hat{B}$.

\vspace{5pt}
\textit{(3) What can we prove without prior knowledge of $\algcost$?}

It turns out that, if we only have an estimate of $\bar{c} = \frac{1}{T}\sum_t \sqrt{c_t}$, respectively $\mu = \frac{1}{T}\sum_t c_t$, then this suffices for regret guarantees on the order of $T\bar{c}/\sqrt{B}$, respectively $T\sqrt{\mu}/\sqrt{B}$.
This ``graceful degradation'' will continue to be true in the main setting.
The idea is that we can follow the optimal form of the pricing strategy while choosing any normalization constant $K \geq \frac{T}{B}\algcost$.
It may no longer be optimal, but it will ensure that we satisfy the budget and give guarantees depending on the magnitude of $K$.
So all we need is an approximate estimate of some value larger than $\algcost$.
Both $\bar{c}$ and $\mu$ are guaranteed to upper-bound on $\algcost$, so both can be used to pick $K$ while satisfying the budget.

To recap, knowledge of only a simple statistic such as the mean of the arriving costs suffices for good learning guarantees, with better knowledge translating to better guarantees.

\vspace{5pt}
\textit{(4) $\algcost$ depends on the algorithm---what are the implications?}

We first note that $\algcost$ can be upper-bounded by, for instance, $\sqrt{\mu}$ where $\mu$ is the average of the arriving costs.
So a bound containing $\algcost$ does imply nontrivial algorithm-independent bounds.
The purpose of $\algcost$ is to capture cases where we can do significantly better than such bounds because the algorithm is a good fit for the problem.
To see this, note that running the FTRL algorithm on the entire data sequence (with no budget constraint) gives a regret bound of $\frac{\beta}{\eta} + \eta\sum_{t=1}^T \Delta_{h_t,f_t}^2$.
The worst case has each $\Delta_{h_t,f_t}$ equal to $1$, producing a $\sqrt{T}$ regret bound.
But in a case where the algorithm has a small average $\Delta_{h_t,f_t}$ and the algorithm enjoys a better regret bound, we may also hope that this improvement is reflected in $\algcost$.

However, one might hope for an algorithm-independent quantity that, in analogy with VC-dimension, captures the ``difficulty'' of the purchasing and learning problem instance.
This leads to the question:

\textit{(4a) Can we remove the algorithm-dependence of the bound?}
One might hope to achieve a bound depending on an algorithm-independent quantity that captures correlations between data and cost.
A natural candidate is $\algcostopt := \frac{1}{T} \sum_t \Delta_{h^*,f_t} \sqrt{c_t}$.
In general, there are difficult cases where one can not achieve a bound in terms of $\algcostopt$.
However, in nicer scenarios we may expect $\algcost$ to approximate $\algcostopt$.
For instance, suppose $\loss(h,z) = \phi(h^\top z)$ where $\phi$ is a differentiable convex function whose gradient is $1$-Lipschitz --- commonly-used examples include the \emph{squared hinge loss} and the \emph{log loss}.
Under this condition, where again we are using $f_t(\cdot) := \ell(\cdot, z_t)$, we can show that
\begin{eqnarray*}
    \Delta_{h_t,f_t}\sqrt{c_t} - \Delta_{h^*,f_t}\sqrt{c_t}
    & = & \|\nabla \ell(h_t,z_t) \|_{\star} \sqrt{c_t} - \|\nabla \ell(h^*,z_t) \|_{\star} \sqrt{c_t} \\
    & \leq & \|(\phi'(h_t^\top z_t) - \phi'(h^{*\top} z_t)) z_t\|_{\star}  \\
    & \leq & |\phi(h_t^\top z_t) - \phi(h^{*\top} z_t)| = |\ell(h_t,z_t) - \ell(h^*,z_t)|.
\end{eqnarray*}
By the regret guarantee of our mechanism when run with a good algorithm, even initialized with very weak knowledge, this difference in losses per time step is $o(1)$, implying that $\algcost \to \algcostopt$.
A deeper investigation of this phenomenon is a good candidate for future work.

\subsection{Mechanisms and Results for Regret Minimization}
In the previous section, we presented our results for the easier ``at-cost'' variant.
We now apply the approach derived for that setting to the main regret minimization problem.

For this problem, unlike in the ``at-cost'' variant, we cannot in general solve for the form of the optimal pricing strategy.
This is intuitively because, when we must pay the price we post, the optimal strategy depends on $c_t$. But the algorithm cannot condition the purchasing decision directly on $c_t$, as this is private information of the arriving agent.

We propose simply drawing posted prices according to the optimal strategy derived for the at-cost setting, namely,
\begin{align}
	\Pr[\pi_t(f) \geq c] = \min\left\{ 1 ~,~ \frac{\Delta_{h_t,f}}{K \sqrt{c}} \right\} ,
\label{eqn:sampling-prob}
\end{align}
but with a different choice of normalization constant $K$.
We note that there is a pricing distribution that accomplishes this:
\begin{observation}
\label{observation:pricing-pdf}
For any $K$ and $\Delta_{h_t,f}$, there exists a pricing distribution on $\pi_t(f)$ that satisfies Equation \ref{eqn:sampling-prob}.
Letting $c^* = \Delta_{h_t,f}^2/K^2$, the CDF is given by $F(\pi) = \Pr[\pi_t(f) \leq \pi] = 0$ if $\pi \leq c^*$, $F(\pi) = 1 - \Delta_{h_t,f}/K\sqrt{\pi}$ if $c^* \leq \pi \leq 1$, and $F(\pi) = 1$ if $\pi > 1$.
\end{observation}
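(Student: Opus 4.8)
The plan is to treat the proposed piecewise formula as the \emph{definition} of a random variable $\pi := \pi_t(f)$, then verify two things: (i) that $F$ really is a legitimate CDF, and (ii) that the induced survival probability $\Pr[\pi \geq c]$ matches the right-hand side of Equation~\ref{eqn:sampling-prob} for every admissible cost $c \in [0,1]$. Write $\Delta := \Delta_{h_t,f}$ and recall $c^* = \Delta^2/K^2$. I would first dispose of the degenerate regime $\Delta > K$: there $c^* > 1$, so for all $c \in [0,1]$ we have $\Delta/(K\sqrt c) \geq \Delta/K > 1$, the target is identically $1$, and posting the deterministic price $\pi \equiv 1$ trivially achieves $\Pr[\pi \geq c] = 1$. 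So the interesting case is $0 < \Delta \leq K$, exactly the range in which $c^* \in [0,1]$ and the three-part formula is meaningful.

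For validity of $F$, I would check monotonicity and boundary behavior. On $[c^*,1]$ the map $\pi \mapsto 1 - \Delta/(K\sqrt\pi)$ has positive derivative $\tfrac{\Delta}{2K}\pi^{-3/2}$, hence is increasing, and it is pinned at $F(c^*) = 1 - \Delta/(K\sqrt{c^*}) = 1 - 1 = 0$, which agrees with the value $0$ on $[0,c^*]$ and makes $F$ continuous at $c^*$. At the upper end $F(1^-) = 1 - \Delta/K < 1$ while $F(\pi) = 1$ for $\pi > 1$, so $F$ carries a single atom of mass $\Delta/K$ at $\pi = 1$; intuitively this is the probability of posting the maximal price $1$, which accepts every cost in $[0,1]$. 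With this, $F$ is non-decreasing, has the correct limits, and defines a bona fide distribution supported on $[c^*,1]$.

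The main computation is then to evaluate $\Pr[\pi \geq c] = 1 - \Pr[\pi < c] = 1 - F(c^-)$, splitting on the location of $c$. For $c \in [0,c^*]$ the left limit $F(c^-) = 0$, so $\Pr[\pi \geq c] = 1$, matching the target (which equals $1$ precisely when $c \leq c^*$). For $c \in (c^*,1)$ the function $F$ is continuous, so $F(c^-) = 1 - \Delta/(K\sqrt c)$ and thus $\Pr[\pi \geq c] = \Delta/(K\sqrt c)$, again matching. At the boundary $c = 1$ the left limit is $F(1^-) = 1 - \Delta/K$, giving $\Pr[\pi \geq 1] = \Delta/K = \Delta/(K\sqrt 1)$, as required. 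Collecting the three regions yields $\Pr[\pi \geq c] = \min\{1, \Delta/(K\sqrt c)\}$ for all $c \in [0,1]$, which is exactly Equation~\ref{eqn:sampling-prob}.

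The only genuinely delicate point — the one I would flag as the obstacle — is the bookkeeping around the atom at $\pi = 1$ together with the weak inequality in the acceptance rule. Because acceptance is $\pi \geq c$, the quantity that must match the target is the survival function $1 - F(c^-)$ rather than $1 - F(c)$; it is the left limit $F(1^-) = 1 - \Delta/K$, and not the value $F(1)$, that produces the correct probability $\Delta/K$ at $c = 1$. Handling this boundary carefully (and noting that $F$ as literally written is not right-continuous at $1$, so the atom must be read off from the jump) is the one subtlety; everything else is a routine monotonicity and limit check.
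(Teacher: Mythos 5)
Your proposal is correct and takes essentially the same route as the paper, which simply reads off from the survival function a density $\Delta_{h_t,f}/(2K\pi^{3/2})$ on $(c^*,1)$ plus an atom of mass $\Delta_{h_t,f}/K$ at $\pi=1$ --- the same verification you perform in the reverse direction, from CDF to survival function. Your added care about using the left limit $F(1^-)$ at the atom (and noting that $F$ as literally written is not right-continuous at $\pi=1$) is a correct reading of the intended distribution and resolves the only genuinely delicate point.
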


The pricing distribution is given in Figure \ref{fig:price-distribution}.
This strategy gives Mechanism \ref{mech:no-regret-unknown}.

\begin{figure}[h]
 \centering
 \subfloat[Probability density function of the pricing distribution. 
  The price $\pi(f) = 1$ with probability $\min\left\{ 1 , \Delta_{h_t,f}/K \right\}$.
  On the interval $(c^*,1)$ the density function is $x \mapsto \Delta_{h_t,f} / 2 K x^{3/2}$.]{
  \includegraphics[width=0.47\textwidth]{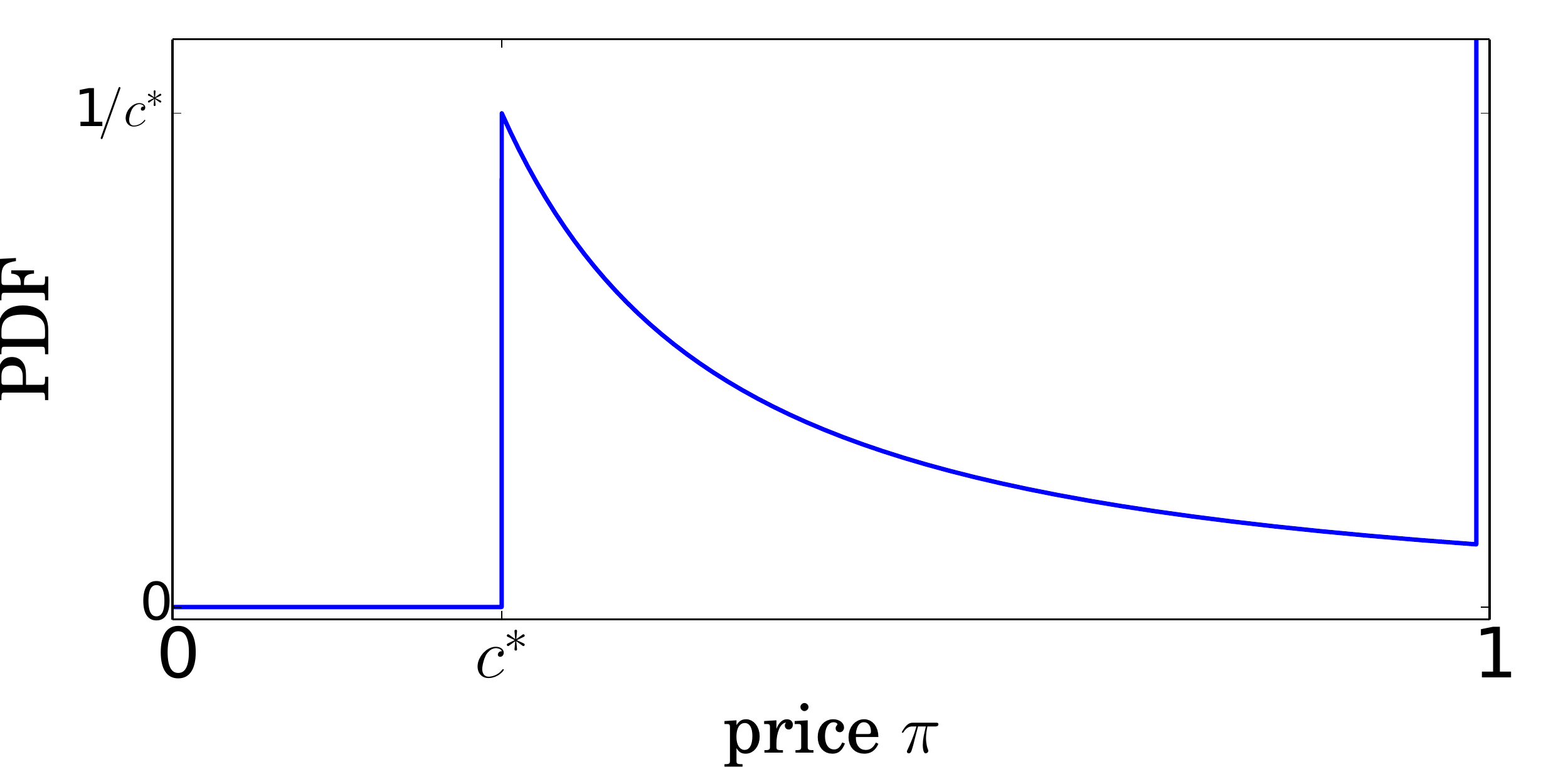}
 } 
 \hfill
 \subfloat[Cumulative distribution function of the pricing distribution. Equal to zero for $\pi \leq c^*$, then equal to $1 - \Delta_{h_t,f} / K\sqrt{\pi}$ on $(c^*, 1)$, then equal to $1$ at cost $1$.]{
  \includegraphics[width=0.47\textwidth]{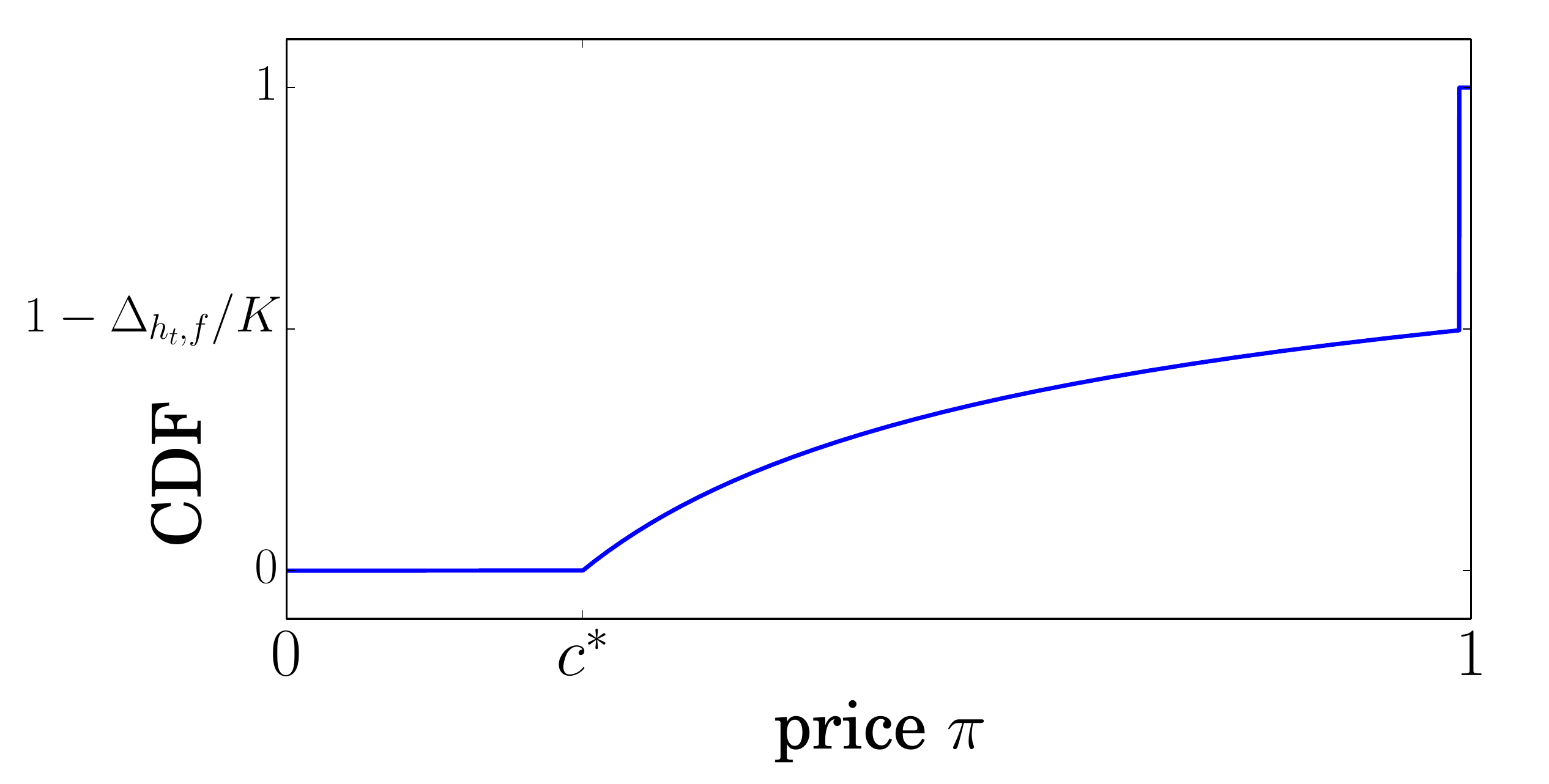}
 } 
 \caption{\textbf{The pricing distribution.}
  \small
  Illustrates the distribution from which we draw our posted prices at time $t$, for a fixed arrival $f$.
  The quantity $\Delta_{h_t,f}$ captures the ``benefit'' from obtaining $f$.
  $K$ is a normalization parameter.
  The distribution's support has a lowest price $c^*$, which has the form $c^* = \Delta_{h_t,f_t}^2 / K^2$.
 } 
 \label{fig:price-distribution}
\end{figure}

\begin{algorithm}
\SetAlgorithmName{Mechanism}{mechanism}{List of Mechanisms}
 \KwIn{parameters $K$, $\eta$, access to online learning algorithm (OLA)}
 set OLA parameter $\eta$\;
 \For{$t=1,\dots,T$}{
  post hypothesis $h_t$ $\leftarrow$ OLA\;
  post prices $\pi_t(f)$ drawn randomly such that
   $\Pr[\pi_t(f) \geq c] = \min\left\{ 1 ~,~ \frac{\Delta_{h_t,f}}{K \sqrt{c}} \right\} $ \;
  \eIf{we receive $(c_t, f_t)$}{
   let $q_t = \Pr_{\pi_t}[\pi_t(f_t) \geq c_t]$\;
   let \emph{importance-weighted loss function} $\hat{f}_t(\cdot) = \frac{f_t(\cdot)}{q_t}$\;
   send $\hat{f}_t$ $\rightarrow$ OLA\;
  }{ 
   send $0$ function $\rightarrow$ OLA\;
  } 
 }
\caption{Mechanism for no-regret data-purchasing problem.}
\label{mech:no-regret-unknown}
\end{algorithm}

As in the known-costs case, our regret bounds depend upon the prior knowledge of the algorithm.
It will turn out to be helpful to have prior knowledge about both $\algcost$ and the following parameter, which can be interpreted as $\algcost$ with all costs $c_t = 1$:
 \[ \algcostmax = \E \frac{1}{T} \sum_t \Delta_{h_t,f_t} . \]

\def\theoremnoregretunknownupper{ 
If Mechanism \ref{mech:no-regret-unknown} is run with prior knowledge of $\algcost$ and of $\algcostmax$ (up to a constant factor), then it can choose $K$ and $\eta$ to satisfy the expected budget constraint and obtain a regret bound of
 \[ O\left( \max\left\{ \frac{T}{\sqrt{B}}g ~,~ \sqrt{T} \right\} \right) , \]
where $g = \sqrt{\algcost \cdot \algcostmax}$ (by setting $K = \frac{T}{B}\algcostmax$).
Similarly, knowledge only of $\algcost$, respectively $\bar{c} = \frac{1}{T}\sum_t \sqrt{c_t}$, respectively $\mu = \frac{1}{T}\sum_t c_t$ suffices for the regret bound with $g = \sqrt{\algcost}$, respectively $g = \sqrt{\bar{c}}$, respectively $g = \mu^{1/4}$.
} 
\begin{theorem} \label{theorem:no-regret-unknown-upper}
\theoremnoregretunknownupper
\end{theorem}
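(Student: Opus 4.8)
The plan is to combine the sampling probabilities induced by Mechanism~\ref{mech:no-regret-unknown} with the importance-weighted regret bound of Lemma~\ref{lemma:ftrl-iwtd-regret}, controlling the budget and the regret as two separate quantities. By construction the probability of obtaining arrival $f_t$ is $q_t = \min\{1, \Delta_{h_t,f_t}/(K\sqrt{c_t})\}$, so the only two objects I need to bound are the expected total payment (for the budget constraint) and $\E\sum_t \Delta_{h_t,f_t}^2/q_t$ (for the regret), after which the choice of $K$ and $\eta$ is mechanical.

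First I would bound the expected per-step payment. Unlike the at-cost variant, here we pay the \emph{random posted price} $\pi_t(f_t)$ whenever the agent accepts, so I would integrate $\E[\pi_t(f_t)\,\1[\pi_t(f_t)\geq c_t]]$ directly against the explicit pricing distribution of Observation~\ref{observation:pricing-pdf}, splitting into the regimes $c_t \leq c^* = \Delta_{h_t,f_t}^2/K^2$ (agent always accepts) and $c_t > c^*$. In both regimes the point mass at price $1$ contributes $\Delta_{h_t,f_t}/K$ and the density $\Delta_{h_t,f_t}/(2K\pi^{3/2})$ integrates to another $O(\Delta_{h_t,f_t}/K)$ term, yielding an expected payment of at most $2\Delta_{h_t,f_t}/K$ per step. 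Summing and taking expectations gives total expected spending at most $2T\algcostmax/K$, so choosing $K = \Theta(T\algcostmax/B)$ meets the expected budget.

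Next I would bound the regret term. Substituting $q_t$ gives $\Delta_{h_t,f_t}^2/q_t = \max\{\Delta_{h_t,f_t}^2,\, K\,\Delta_{h_t,f_t}\sqrt{c_t}\} \leq \Delta_{h_t,f_t}^2 + K\,\Delta_{h_t,f_t}\sqrt{c_t}$, and using $\Delta_{h_t,f_t}\leq 1$ with Definition~\ref{def:gamma} and the definition of $\algcostmax$ I would obtain the a priori bound $M = 2T\algcostmax + 2KT\algcost$ on $2\E\sum_t \Delta_{h_t,f_t}^2/q_t$. Following the framework of Section~\ref{sec:noregret-strategic}, setting $\eta = \Theta(1/\sqrt{M})$ yields regret $O(\sqrt{M})$. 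Plugging in $K = \Theta(T\algcostmax/B)$, using $\sqrt{a+b}\leq\sqrt{a}+\sqrt{b}$, and $\algcostmax\leq 1$ (so $\sqrt{T\algcostmax}\leq\sqrt{T}$) gives the claimed $O(\max\{\sqrt{T},\, (T/\sqrt{B})\sqrt{\algcost\algcostmax}\})$.

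Finally, for the weaker prior-knowledge variants I would note that without $\algcostmax$ one cannot set $K=\Theta(T\algcostmax/B)$, but since $\algcostmax\leq 1$ one may always take $K=\Theta(T/B)$, which still satisfies the budget. The regret then degrades to $O(\sqrt{T} + (T/\sqrt{B})\sqrt{\algcost})$, and substituting the chain $\algcost \leq \bar{c} \leq \sqrt{\mu}$---where the first inequality uses $\Delta_{h_t,f_t}\leq 1$ and the second is Jensen's inequality for the concave $\sqrt{\cdot}$---recovers $g=\sqrt{\algcost}$, $g=\sqrt{\bar c}$, and $g=\mu^{1/4}$ as each statistic becomes the only available upper bound. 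I expect the main obstacle to be Step~2: the full-price payment analysis is where this setting genuinely differs from the at-cost variant, and it is precisely the fact that the expected payment scales like $\Delta_{h_t,f_t}/K$ (rather than $\sqrt{c_t}\,\Delta_{h_t,f_t}/K$ as at cost) that forces $\algcostmax$ into the budget and hence produces the product $g=\sqrt{\algcost\algcostmax}$ rather than the cleaner at-cost bound.
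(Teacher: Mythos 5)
Your proposal is correct and follows essentially the same route as the paper's proof: integrate the posted-price distribution to show the expected payment per step is $O(\Delta_{h_t,f_t}/K)$ (which is what pulls $\algcostmax$ into the choice of $K$), split the regret sum into the $q_t=1$ and $q_t<1$ regimes to get $M = O(T + KT\algcost)$, and tune $\eta=\Theta(1/\sqrt{M})$; the degraded bounds then follow from the chain $\algcost\leq\bar{c}\leq\sqrt{\mu}$ with the safe choice $K=\Theta(T/B)$, exactly as in the paper. The only cosmetic differences are that the paper tracks the slightly sharper payment bound $\frac{\Delta_{h_t,f_t}}{K}(2\sqrt{c_{\max}}-\sqrt{\max\{c_t,c^*\}})$, yielding $K_{\min}=\frac{T}{B}(2\algcostmax-\algcost)$ before relaxing to $2\algcost\algcostmax$, and bounds $\sum_t\Delta_{h_t,f_t}^2$ by $T$ rather than your (equally valid) $T\algcostmax$.
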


We can observe a quantifiable ``price of strategic behavior'' in the difference between the regret guarantees of Theorems \ref{theorem:no-regret-unknown-upper} (this setting) and Theorem \ref{theorem:noregret-nonstrategic} (the ``at-cost'') setting:
 \[ \frac{T}{\sqrt{B}} \sqrt{\algcost \cdot \algcostmax}  ~~~ \text{vs} ~~~ \frac{T}{\sqrt{B}} \algcost  . \]
Note that $\algcostmax \geq \algcost$, and they approach equality as all costs approach the upper bound $1$, but become very different as the average cost $\mu \to 0$ while the maximum cost remains fixed at $1$.

\Paragraph{Comparison to lower bound}
Our lower-bound for the data purchasing regret minimization problem is $\Omega\left(\frac{T}{\sqrt{B}}\algcost\right)$ (follows from the lower bound for the at-cost setting, Theorem \ref{theorem:noregret-known-lower-gamma}). So the difference in bounds discussed above, a factor of $\sqrt{\algcostmax}$ versus $\sqrt{\algcost}$, is the only gap between our upper and lower bounds for the general data purchasing no regret problem.

The most immediate open problem in this paper is close this gap.
Intuitively, the lower bound does not take advantage of ``strategic behavior'' in that a posted-price mechanism may often have to pay significantly more than the data actually costs, meaning that it obtains less data in the long run.
Meanwhile, it may be possible to improve on our upper-bound strategy by drawing prices from a different distribution.

\section{Results for Statistical Learning}
\label{sec:tieback}
In this section, we give the final mechanism, Mechanism \ref{mech:stat-learning}, for the data purchasing statistical learning problem.
The idea is to simply run the regret-minimization Mechanism \ref{mech:no-regret-unknown} on the arriving agents.
At each stage, Mechanism \ref{mech:no-regret-unknown} posts a hypothesis $h_t$.
We then aggregate these hypothesis by averaging to obtain our final prediction.

\begin{algorithm}
\SetAlgorithmName{Mechanism}{mechanism}{List of Mechanisms}
  \KwIn{parameters $K,\eta$, access to OLA}
  identify each data point $z$ with the loss function $f(\cdot) = \loss(\cdot, z)$\;
  run Mechanism \ref{mech:no-regret-unknown} with parameters $\eta, K$ and access to OLA\;
  let $h_1,\dots,h_T$ be the resulting hypotheses\;
  output $\bar{h} = \frac{1}{T}\sum_t h_t$\;
\caption{Mechanism for statistical learning data-purchasing problem.}
\label{mech:stat-learning}
\end{algorithm}

\begin{theorem} \label{thm:stat-learning-bounds}
Mechanism \ref{mech:stat-learning} guarantees spending at most $B$ in expectation and
 \[ \E \err(\bar{h}) \leq \err(h^*) + O\left({\textstyle \max\left\{\frac{g}{\sqrt{B}} ~,~ \sqrt{\frac{1}{T}}\right\} }\right) , \]
where $g = \sqrt{\algcost \cdot \algcostmax}$, assuming that $\algcost$ and $\algcostmax$ are known in advance up to a constant factor.

If one assumes approximate knowledge respectively of $\algcost$, of $\bar{c} = \frac{1}{T} \sum_t \sqrt{c_t}$, or of $\mu = \frac{1}{T}\sum_t c_t$, then the guarantee holds with respectively $g = \sqrt{\algcost}$, $g = \sqrt{\bar{c}}$, or $g = \mu^{1/4}$.
\end{theorem}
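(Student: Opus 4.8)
The plan is to obtain this result by directly composing the regret guarantee of Theorem \ref{theorem:no-regret-unknown-upper} with the online-to-batch conversion of Lemma \ref{lemma:online-to-batch}. Mechanism \ref{mech:stat-learning} does nothing more than run the regret-minimization Mechanism \ref{mech:no-regret-unknown} on the arriving agents---identifying each data point $z_t$ with the $1$-Lipschitz convex loss function $f_t(\cdot) = \loss(\cdot,z_t)$---and then average the posted hypotheses to form $\bar{h} = \frac{1}{T}\sum_t h_t$. Since the mechanism's payments are exactly those of Mechanism \ref{mech:no-regret-unknown}, the expected-budget guarantee transfers immediately from Theorem \ref{theorem:no-regret-unknown-upper}: the mechanism spends at most $B$ in expectation. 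So the entire task reduces to converting the regret bound into a risk bound.

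First I would verify that the hypotheses generated while running Mechanism \ref{mech:no-regret-unknown} satisfy the precondition of Lemma \ref{lemma:online-to-batch}. Because each $z_t$ is drawn i.i.d. from $\D$, the induced loss functions $f_t(\cdot) = \loss(\cdot,z_t)$ are i.i.d. draws from the distribution $\F$ over loss functions induced by $\D$, and $\err(h) = \E_{z\sim\D}\loss(h,z)$ coincides with the risk $\E_{f}\, f(h)$ appearing in the lemma. The one point requiring care is that $h_t$ is posted \emph{before} the pair $(c_t,f_t)$ is revealed, so $h_t$ is a function only of the history through round $t-1$ together with the mechanism's internal randomness; in particular $h_t$ is independent of the fresh draw $f_t$. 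This is exactly the condition needed for $\E[f_t(h_t) \mid \text{history}] = \err(h_t)$, and it holds even though the costs $c_t$ may be arbitrarily correlated with the data, since the correlation only affects which past points were purchased and never lets $h_t$ peek at $f_t$. (As the excerpt already notes, online-to-batch continues to hold in the data-purchasing setting.) Thus Lemma \ref{lemma:online-to-batch} applies with $R(T)$ equal to the expected regret of Mechanism \ref{mech:no-regret-unknown}.

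Then I would plug in the regret bound. Theorem \ref{theorem:no-regret-unknown-upper} gives $R(T) = O\left(\max\left\{\frac{T}{\sqrt{B}}g, \sqrt{T}\right\}\right)$, so the conversion yields
\[ \E\, \err(\bar{h}) \leq \err(h^*) + \frac{1}{T} R(T) = \err(h^*) + O\left(\max\left\{\frac{g}{\sqrt{B}}, \frac{1}{\sqrt{T}}\right\}\right), \]
using $\frac{1}{T}\cdot\frac{T}{\sqrt{B}}g = \frac{g}{\sqrt{B}}$ and $\frac{1}{T}\sqrt{T} = \sqrt{1/T}$, which is the claimed bound. The four variants of $g$---namely $\sqrt{\algcost\cdot\algcostmax}$, $\sqrt{\algcost}$, $\sqrt{\bar{c}}$, and $\mu^{1/4}$, under the respective prior-knowledge assumptions and corresponding choices of $K$---transfer verbatim from the four cases of Theorem \ref{theorem:no-regret-unknown-upper}, since the online-to-batch step is agnostic to which regret bound is supplied.

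There is no genuinely hard step here: the content of the theorem is entirely front-loaded into Theorem \ref{theorem:no-regret-unknown-upper} (the pricing strategy, the budget-feasible choice of $K$, and the importance-weighted regret analysis) and into the standard online-to-batch lemma. The only place demanding attention is the independence argument above, ensuring that the correlation between costs and data does not corrupt the unbiasedness $\E[f_t(h_t)] = \E[\err(h_t)]$ that underlies the conversion.
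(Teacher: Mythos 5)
Your proposal is correct and follows exactly the paper's own argument: compose the regret bound of Theorem \ref{theorem:no-regret-unknown-upper} with the online-to-batch conversion of Lemma \ref{lemma:online-to-batch} and divide by $T$. The additional remarks on the independence of $h_t$ from the fresh draw $f_t$ are a sound (and welcome) elaboration of a point the paper treats implicitly, but they do not change the route.
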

\begin{proof}
By Theorem \ref{theorem:no-regret-unknown-upper}, Mechanism \ref{mech:no-regret-unknown} guarantees an expected regret of $O\left(\max\left\{\frac{T}{\sqrt{B}}g ~,~ \sqrt{T} \right\}\right)$ when run with the specified prior knowledge for the specified values of $g$.
Therefore, the online-to-batch conversion of Lemma \ref{lemma:online-to-batch} proves the theorem.
\end{proof}

The statement of Main Result \ref{mainresult} is the special case where only $\algcost$ is known and $g = \sqrt{\algcost}$. 
A detailed discussion of $\algcost$ is in Section~\ref{sec:gamma_interpretation}.

\section{Deriving Pricing and the ``at-cost'' Variant}
\label{sec:noregret-nonstrategic}
In Section \ref{subsec:noregret-at-cost}, we stated our results for the easier at-cost variant of the regret minimization with purchased data problem.
This included the posted-price distribution that we use for our main results.
In this section, we show how these results and this distribution are derived.
The ``at-cost'' variant is formally defined in exactly the same way as the main setting, except that when $\pi_t \geq c_t$ and the transaction occurs, the mechanism only pays the cost $c_t$ rather than the posted price $\pi_t$.

We first show how our posted-price strategy is derived as the optimal solution to the problem of minimizing regret subject to the budget constraint.
The resulting upper bounds for the ``at-cost'' variant were given in Theorem \ref{theorem:noregret-nonstrategic}.
Then, we give some fundamental lower bounds on regret, showing that in general our upper bounds cannot be improved upon here.
These lower bounds also hold for the main no-regret data purchasing problem, where there is a small gap to the upper bounds.

\subsection{Deriving an Optimal Pricing Strategy}
We begin by asking what seems to be an even easier question. Suppose that for every pair $(c_t,f_t)$ that arrives, we could first ``see'' $(c_t,f_t)$, then choose a probability with which to obtain $(c_t,f_t)$ and pay $c_t$. What would be the optimal probability with which to take this data?
\def\lemmaknownoptimalstrategy{ 
  To minimize the regret bound of Lemma \ref{lemma:ftrl-iwtd-regret}, the optimal choice of sampling probability is of the form
  $\displaystyle q_t = \min\left\{ 1 ~,~ \Delta_{h_t,f_t}/K^* \sqrt{c_t} \right\}.$ 
  The normalization factor $K^* \approx \frac{T}{B}\algcost$.
} 
\begin{lemma} \label{lemma:known-optimal-strategy}
\lemmaknownoptimalstrategy
\end{lemma}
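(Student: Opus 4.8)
The plan is to recognize this as a convex resource-allocation program and solve it via a single Lagrange multiplier for the budget. Since the lemma grants us the ability to ``see'' $(c_t,f_t)$ before committing to a sampling probability, I would treat the choice of each $q_t$ as a pointwise decision made after $(h_t,c_t,f_t)$ is realized. In the regret bound of Lemma \ref{lemma:ftrl-iwtd-regret}, namely $\frac{\beta}{\eta} + 2\eta\,\E\sum_t \frac{\Delta_{h_t,f_t}^2}{q_t}$, the additive term $\frac{\beta}{\eta}$ and the factor $2\eta$ do not involve the $q_t$, so it suffices to minimize $\E\sum_t \frac{\Delta_{h_t,f_t}^2}{q_t}$ subject to the feasibility constraints $0 < q_t \le 1$ and the expected budget constraint $\E\sum_t q_t c_t \le B$. (In the at-cost variant, sampling arrival $t$ costs exactly $c_t$, so $\E\sum_t q_t c_t$ is precisely the expected spend.)

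First I would observe that the objective is separable and convex in each $q_t$, because $q \mapsto 1/q$ is convex on $(0,\infty)$, while the budget constraint is linear; the program is therefore convex and its KKT conditions characterize the global optimum. Dualizing only the single scalar budget constraint with a multiplier $\lambda \ge 0$ decouples the problem across time steps and realizations into the one-dimensional subproblems
\[ \min_{0 < q_t \le 1} \; \frac{\Delta_{h_t,f_t}^2}{q_t} + \lambda\, q_t c_t . \]
Setting the derivative to zero at an interior point gives $-\Delta_{h_t,f_t}^2/q_t^2 + \lambda c_t = 0$, i.e. $q_t = \Delta_{h_t,f_t}/(\sqrt{\lambda}\,\sqrt{c_t})$, and imposing the box constraint $q_t \le 1$ yields $q_t = \min\{1,\, \Delta_{h_t,f_t}/(K^*\sqrt{c_t})\}$ with $K^* := \sqrt{\lambda}$, exactly the claimed form.

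It then remains to pin down $K^*$ from the requirement that the budget bind. Substituting the uncapped optimizer into the constraint and using $q_t c_t = \Delta_{h_t,f_t}\sqrt{c_t}/K^*$, I obtain $\E\sum_t q_t c_t = \frac{1}{K^*}\E\sum_t \Delta_{h_t,f_t}\sqrt{c_t} = \frac{T\algcost}{K^*}$ by Definition \ref{def:gamma}. Setting this equal to $B$ gives $K^* = \frac{T}{B}\algcost$, matching the statement.

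The main obstacle is the cap $q_t \le 1$, which is exactly what forces the ``$\approx$'' in the claimed normalization. On any realization where the interior solution would exceed $1$, we have $\Delta_{h_t,f_t}/\sqrt{c_t} > K^*$, so the true per-term spend $c_t$ is strictly less than the uncapped estimate $\Delta_{h_t,f_t}\sqrt{c_t}/K^*$; hence the substitution above overcounts the spend, and the exact multiplier must be chosen slightly smaller than $\frac{T}{B}\algcost$ to saturate the budget. I would argue that, since $\algcost \in [0,1]$ and the capped mass is controlled, this correction only changes $K^*$ by a constant factor, which is all that is needed downstream (the prior knowledge of $\algcost$ is itself only required up to a constant), so the derived strategy and the estimate $K^* \approx \frac{T}{B}\algcost$ suffice for the regret guarantees of Theorem \ref{theorem:noregret-nonstrategic}.
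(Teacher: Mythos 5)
Your argument is correct and follows essentially the same route as the paper's: formulate the regret-bound minimization as a convex program with the expected budget and box constraints, apply Lagrangian/KKT reasoning to get $q_t = \min\{1, \Delta_{h_t,f_t}/(K^*\sqrt{c_t})\}$ with $K^* = \sqrt{\lambda}$, and identify $K^* \approx \frac{T}{B}\algcost$ from the binding budget constraint (the paper dualizes the box constraints with explicit multipliers $\alpha_t$ and uses complementary slackness, whereas you fold them into one-dimensional subproblems, which is equivalent). One cosmetic difference: where you assert the cap $q_t \le 1$ perturbs $K^*$ by only a constant factor, the paper instead makes the ``$\approx$'' precise by noting that in the scarce-budget regime no arrival is capped, so $\E K^* = \frac{T}{B}\algcost$ exactly, and that any $K \ge \frac{T}{B}\algcost$ remains budget-feasible --- the same feasibility observation you correctly make.
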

The proof follows by formulating the convex programming problem of minimizing the regret bound of Lemma \ref{lemma:ftrl-iwtd-regret} subject to an expected budget constraint.
It also gives the form of the normalization constant $K^*$, which depends on the input data sequence and the hypothesis sequence.

The key insight is now that we can actually achieve the sampling probabilities dictated by Lemma \ref{lemma:known-optimal-strategy} using a randomized posted-price mechanism.
Notice that these optimal sampling probabilities are decreasing in $c_t$.
In general, when drawing a price from some distribution, the probability that it exceeds $c$ will be decreasing in $c$.
So it only remains to find the posted-price distribution that actually induces the sampling probabilities that we want for all $c$ simultaneously.
That is, by randomly drawing posted prices according to our distribution, we choose to purchase $(c_t,f_t)$ with exactly the probability $q_t$ stated in Lemma \ref{lemma:known-optimal-strategy}, for any possible value of $c_t$ and without knowing $(c_t, f_t)$.

Thus, our final mechanism for the at-cost variant is to simply apply Mechanism \ref{mech:no-regret-unknown}, but only pay the cost of the arrival rather than the price we posted. We set $K = \frac{T}{B}\algcost$.
Note that this choice of normalization constant $K$ is different from the main setting because we on average pay less in the at-cost setting; this leads to the difference in the regret bounds.
Our main bound for the at-cost variant was given in Theorem \ref{theorem:noregret-nonstrategic}.
An open problem for this setting is whether one can obtain the same regret bounds without any prior knowledge at all about the arriving costs and data.

\subsection{Lower Bounds for Regret Minimization}
Here, we prove lower bounds analogous to the classic regret lower bound, which states that no algorithm can guarantee to do better than $O(\sqrt{T})$.
These lower bounds will hold even in the ``at-cost'' setting, where they match our upper bounds.
An open problem is to obtain a larger-order lower bound for the main setting where the mechanism pays its posted price.
This would show a separation between the at-cost variant and the main problem.

First, we give what might be considered a ``sample complexity'' lower bound for no-regret learning: It specializes our setting to the case where all costs are equal to one (and this is known to the algorithm in advance), so the question is what regret is achievable by an algorithm that observes $B$ of the $T$ arrivals.

\def\theoremnoregretknownlower{ 
  Suppose all costs $c_t = 1$. No algorithm for the at-cost online data-purchasing problem has regret better than $O(T/\sqrt{B})$; that is, for every algorithm, there exists an input sequence on which its regret is $\Omega(T/\sqrt{B})$.
} 
\begin{theorem} \label{theorem:noregret-known-lower}
\theoremnoregretknownlower
\end{theorem}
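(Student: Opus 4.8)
The plan is to prove the lower bound by Yao's principle: I will exhibit a distribution over at-cost instances on which \emph{every} budget-respecting (possibly randomized) mechanism suffers expected regret $\Omega(T/\sqrt{B})$, which immediately yields a single bad sequence for each fixed mechanism. The instance is the one-dimensional ``biased-coin'' construction, the standard hard case for label-efficient prediction. Take $\H$ to be the unit ball $\{h : \|h\| \le 1\}$ in $\R$, draw a hidden sign $\sigma \in \{-1,+1\}$ uniformly, and let each loss be linear, $f_t(h) = g_t h$, where the gradients $g_t \in \{-1,+1\}$ are i.i.d.\ with $\E[g_t \mid \sigma] = \epsilon\sigma$ for a bias $\epsilon = c/\sqrt{B}$ to be fixed. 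Each such $f_t$ is convex and $1$-Lipschitz with $\Delta_{h,f_t} = 1$, so this is a legal instance (indeed the worst case $\algcost = \algcostmax = 1$, since $c_t=1$).

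Step one is the benchmark. Since $\min_{\|h\|\le 1}\sum_t g_t h = -|\sum_t g_t|$ and $\E|\sum_t g_t| \ge |\E\sum_t g_t| = T\epsilon$, the best fixed hypothesis has expected loss at most $-T\epsilon = -cT/\sqrt{B}$. Step two is to show the mechanism cannot track this. Because it posts $h_t$ before seeing $f_t$, conditioning on its information $\mathcal F_{t-1}$ gives $\E[g_t h_t \mid \mathcal F_{t-1}] = h_t\,\epsilon\,\E[\sigma \mid \mathcal F_{t-1}]$, so with $|h_t|\le 1$ the algorithm's expected loss obeys $\E[\text{alg loss}] \ge -\epsilon \sum_t \E\big|\E[\sigma \mid \mathcal F_{t-1}]\big|$. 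The quantity $\E|\E[\sigma\mid\mathcal F_{t-1}]|$ equals the total-variation distance between the mechanism's views under $\sigma=+1$ and $\sigma=-1$, and I will bound it by $O(\epsilon\sqrt{B})$: the budget forces the mechanism to observe at most $O(B)$ gradients in expectation, each gradient carries $\mathrm{KL} = O(\epsilon^2)$ about $\sigma$, so by the $\mathrm{KL}$ chain rule the view has $\mathrm{KL} = O(B\epsilon^2)$ and Pinsker gives total variation $O(\epsilon\sqrt{B})$. Combining, $\E[\text{alg loss}] \ge -O(\epsilon^2\sqrt{B}\,T) = -O(c^2 T/\sqrt{B})$, so the expected regret is at least $(c - O(c^2))\,T/\sqrt{B}$; taking $c$ a small constant makes this $\Omega(T/\sqrt{B})$. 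Averaging over the construction then produces, for every mechanism, a fixed sequence with regret $\Omega(T/\sqrt{B})$.

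The main obstacle is the step ``at most $O(B)$ observed gradients,'' which is where the posted-price model departs from textbook label-efficient prediction. Here the mechanism commits to a price function $\pi_t$ that may assign different prices to $f_{+1}$ and $f_{-1}$; when it ``separates'' the two values and the charged-for value fails to arrive, the \emph{null} signal itself reveals $g_t$ at no cost. I must argue this free inference does not help asymptotically: since the two gradient values are nearly balanced ($\tfrac12 \pm \tfrac\epsilon2$), learning a given $g_t$ this way still pays $1$ with probability $\approx \tfrac12$, so the expected budget $B$ caps the total number of learned (paid-or-inferred) gradients at $O(B)$, while rounds in which the mechanism buys nothing reveal nothing. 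Making this accounting rigorous---ideally by bounding the $\mathrm{KL}$ of the view through a per-round chain-rule argument that charges only the rounds in which information is actually extracted---is the crux; the benchmark fluctuation, Pinsker, and constant-tuning steps are routine. Finally, for $B = \Theta(T)$ one can instead invoke the unbiased ($\epsilon=0$) Rademacher construction, whose $\Omega(\sqrt{T}) = \Omega(T/\sqrt{B})$ regret recovers the classical bound, so the two constructions together cover all $B \le T$.
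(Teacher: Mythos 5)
Your construction is the same one the paper uses---a coin with bias $\epsilon = \Theta(1/\sqrt{B})$, all costs equal to $1$, so that the budget limits the mechanism to roughly $B$ samples' worth of information about the hidden bias---but you formalize the ``cannot distinguish'' step differently. The paper runs a black-box reduction: it shows (Claims 1--3) that a mechanism with regret $o(T\epsilon)$ could be simulated with $O(B)$ coin flips and used to identify the biased coin, contradicting the $\Omega(1/\epsilon^2)$ sample-complexity lower bound for that task; the simulation argument is where it handles the budget. You instead do a self-contained information-theoretic calculation: condition the per-round loss on the mechanism's filtration, identify $\E\bigl|\E[\sigma\mid\mathcal{F}_{t-1}]\bigr|$ with a total-variation distance, and bound it via the KL chain rule and Pinsker. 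Both routes must confront the same posted-price subtlety that you correctly flag as the crux---a price menu separating the two data values lets the null signal reveal the gradient without payment---and both resolve it identically: in any round whose menu is not ``buy nothing,'' the mechanism pays $1$ with probability at least $\tfrac{1}{2}-\tfrac{\epsilon}{2}\geq\tfrac13$, so the expected budget caps the number of informative rounds at $O(B)$, while ``buy nothing'' rounds contribute zero conditional KL (this is exactly the paper's Claim 3, phrased there as ``no coin flip is needed''). Your sketch of how to make this rigorous is sound, since the chain rule charges only rounds whose ($\mathcal{F}_{t-1}$-measurable) menu is informative and the count of such rounds is controlled by the expected-spend constraint. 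What each approach buys: the paper's reduction outsources the hard quantitative step to a known sample-complexity bound and yields clean explicit constants via Markov, whereas your argument is self-contained, avoids the simulation/failure-probability bookkeeping, and degrades gracefully across the whole range $B\leq T$ without needing a separate case for $B=\Theta(T)$ (your closing remark about a second Rademacher construction is in fact unnecessary, since the biased construction already gives $\Omega(\sqrt{T})$ there).
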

\begin{proofidea} We will have two coins, with probabilities $\frac{1}{2} \pm \epsilon$ of coming up heads.
We will take one of the coins and provide $T$ i.i.d. flips as the input sequence.
The possible hypotheses for the algorithm are $\{\text{heads}, \text{tails}\}$ and the loss is zero if the hypothesis matches the flip and one otherwise.
The cost of every data point will be one.

The idea is that an algorithm with regret much smaller than $T\epsilon$ must usually predict heads if it is the heads-biased coin and usually predict tails if it is the tails-biased coin.
Thus, it can be used to distinguish these cases.
However, there is a lower bound of $\Omega\left(\frac{1}{\epsilon^2}\right)$ samples required to distinguish the coins, and the algorithm only has enough budget to gain information about $O(B)$ of the samples.
Setting $\epsilon = 1/\sqrt{B}$ gives the regret bound.
\end{proofidea}

We next extend this idea to the case with heterogeneous costs.
The idea is very simple: Begin with the problem from the label-complexity lower bound, and introduce ``useless'' data points and heterogeneous costs.
The worst or ``hardest'' case for a given average cost is when cost is perfectly correlated with benefit, so all and only the ``useful'' data points are expensive.
\def\theoremnoregretknownlowergamma{ 
  No algorithm for the non-strategic online data-purchasing problem has expected regret better than $O\left(\algcost T / \sqrt{B}\right)$; that is, for every $\algcost$, for every algorithm, there exists a sequence with parameter $\algcost$ on which its regret is $\Omega\left(\algcost T / \sqrt{B} \right)$.
  Similarly, for $\bar{c} = \frac{1}{T}\sum_t \sqrt{c}$ and $\mu = \frac{1}{T}\sum_t c_t$, we have the lower bounds $\Omega\left(T \bar{c} / \sqrt{B}\right)$  and $\Omega\left(T \sqrt{\mu} / \sqrt{B}\right)$.
} 
\begin{theorem} \label{theorem:noregret-known-lower-gamma}
\theoremnoregretknownlowergamma
\end{theorem}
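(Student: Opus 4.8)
The plan is to reduce to the equal-cost lower bound of Theorem \ref{theorem:noregret-known-lower} by a padding-and-embedding argument: take the hard coin-distinguishing instance as a block of ``useful'' arrivals, assign costs so that cost is perfectly correlated with benefit, and pad the sequence with ``useless'' arrivals that carry neither loss nor information. Concretely, I would use $m$ useful arrivals, each a flip of one of two $(\tfrac12\pm\epsilon)$-biased coins exactly as in Theorem \ref{theorem:noregret-known-lower}, with a common cost $c$, together with $T-m$ useless arrivals whose loss function is identically zero and whose cost is $0$.

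A first point to pin down is that $\algcost$ is well defined for such a sequence independently of the algorithm's (random) hypotheses. I would realize the coin problem with a \emph{linear} loss on a relaxed hypothesis space (e.g.\ $\H=[-1,1]$ with $f_t(h)$ linear and depending on the flip), so that $\Delta_{h_t,f_t}=\|\nabla f_t\|_\star$ is a fixed constant on every useful arrival, independent of $h_t$, and is exactly $0$ on every useless arrival. Consequently $\algcost = \Theta\!\left(\tfrac{m\sqrt c}{T}\right)$, and likewise $\bar c = \tfrac{m\sqrt c}{T}$ and $\mu = \tfrac{mc}{T}$ are deterministic functions of the construction, so we may choose $(m,c)$ to realize any target value.

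The heart of the argument is the budget-to-sample reduction. Since the useless arrivals are free and reveal nothing about which coin was chosen, they contribute no regret and no information; only useful arrivals cost money, each costing $c$, so the expected-budget constraint caps the expected number of useful arrivals the mechanism can observe at $B/c$. I would then invoke the same information-theoretic obstruction as in Theorem \ref{theorem:noregret-known-lower}: distinguishing the two coins requires $\Omega(1/\epsilon^2)$ observed flips, so setting $\epsilon = \Theta(\sqrt{c/B})$ forces the mechanism to misclassify a constant fraction of the $m$ useful rounds in expectation, yielding regret $\Omega(m\epsilon) = \Omega\!\left(m\sqrt c/\sqrt B\right)$ on the useful block (the benchmark $h^*$ is the majority prediction, determined by the useful arrivals alone). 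Matching this to the statistics gives $\Omega(\algcost T/\sqrt B)$ and $\Omega(\bar c\, T/\sqrt B)$. For the bound in terms of $\mu$, I would specialize to $m=T$ with uniform cost $c=\mu$ on every arrival; then $\bar c=\sqrt\mu$, so the $\bar c$-bound already reads $\Omega(T\sqrt\mu/\sqrt B)$. Concavity of $\sqrt{\cdot}$ shows uniform costs maximize $\bar c$ for a fixed mean $\mu$, which is exactly why ``spreading'' the cost equally is the worst case at a given average.

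I expect the main obstacle to be making the reduction fully rigorous for a randomized, adaptive mechanism under an \emph{expected} budget: one must argue that the free useless arrivals cannot be leveraged to extract extra information or to game the expected-budget accounting, and that the $\Omega(1/\epsilon^2)$ sample bound (via a KL/Pinsker or Le Cam argument, as underlies Theorem \ref{theorem:noregret-known-lower}) transfers cleanly when the number of informative observations is itself a random quantity bounded only in expectation. The remaining work is routine bookkeeping on parameter ranges: ensuring $c\le 1$, $m\le T$, $\epsilon\le\tfrac12$, and that we are in the nontrivial regime $B/c < m$ where the mechanism genuinely cannot afford to observe every useful arrival.
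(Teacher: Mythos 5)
Your proposal is correct and follows essentially the same route as the paper's proof: pad the sequence with free, loss-irrelevant arrivals, embed the coin-distinguishing instance of Theorem \ref{theorem:noregret-known-lower} on the remaining arrivals with cost perfectly correlated with benefit, and verify that $\algcost$, $\bar{c}$, and $\mu$ take the intended values (the paper likewise handles the $\mu$ bound by re-running the equal-cost construction with every cost set to $\mu$). Your version with a general common cost $c$ on the useful block is a mild unification of the paper's two parameter choices, not a different argument.
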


\ifnum\arxiv=1  
\section{Examples and Experiments}
\label{sec:examples}
In this section, we give some examples of the performance of our mechanisms on data.
We use a binary classification problem with feature vector $x \in \mathbb{R}^d$ and label $y \in \{-1,1\}$.
The dataset is described in Figure \ref{fig:sim-dataset}.

\begin{figure}[h]
 \centering
 \subfloat[Visualizing the classification problem without costs.]{
  \includegraphics[width=0.35\textwidth]{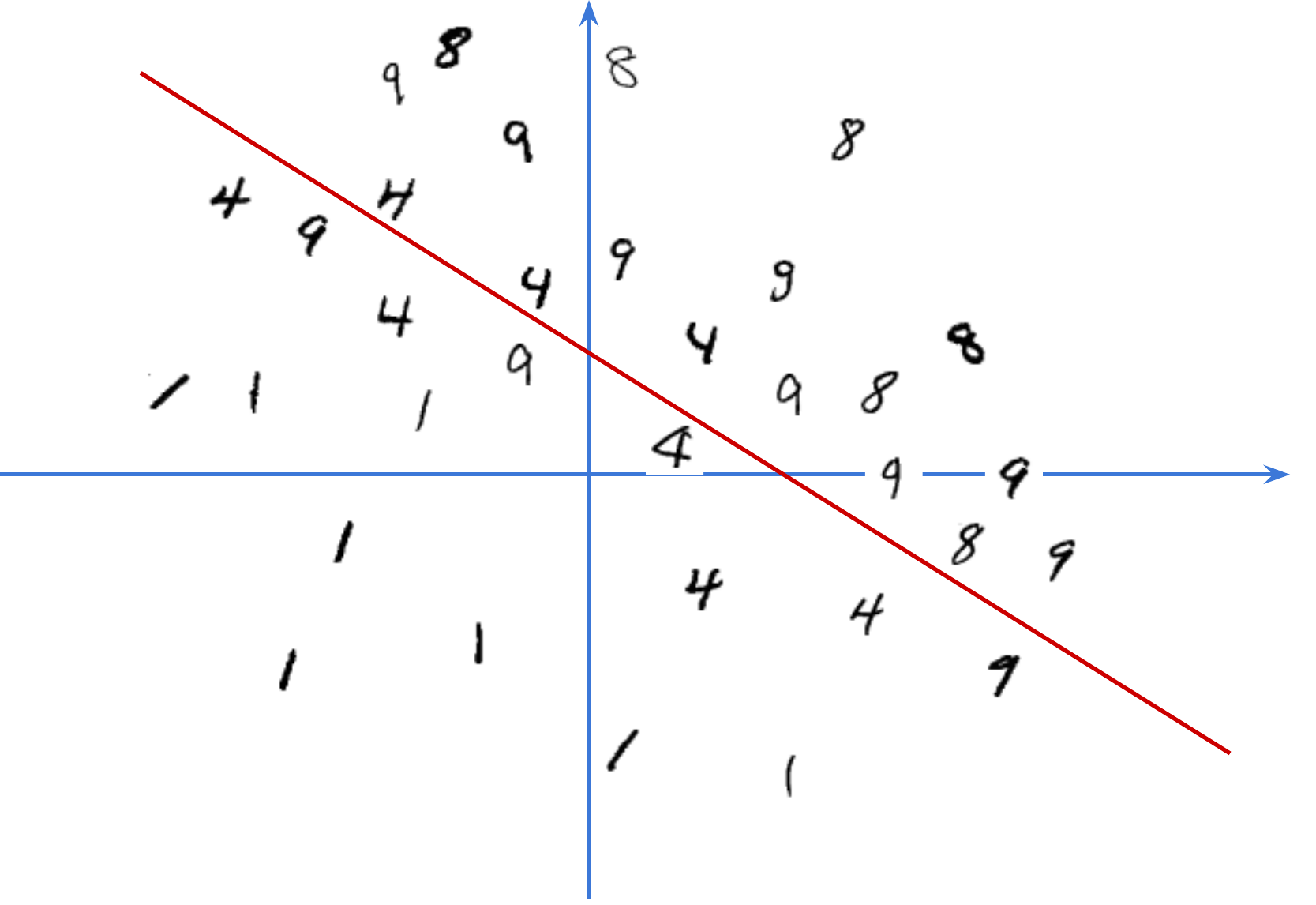}
 } 
 \hspace{6em}
 \subfloat[A brighter green background corresponds to a higher-cost data point.]{
  \includegraphics[width=0.35\textwidth]{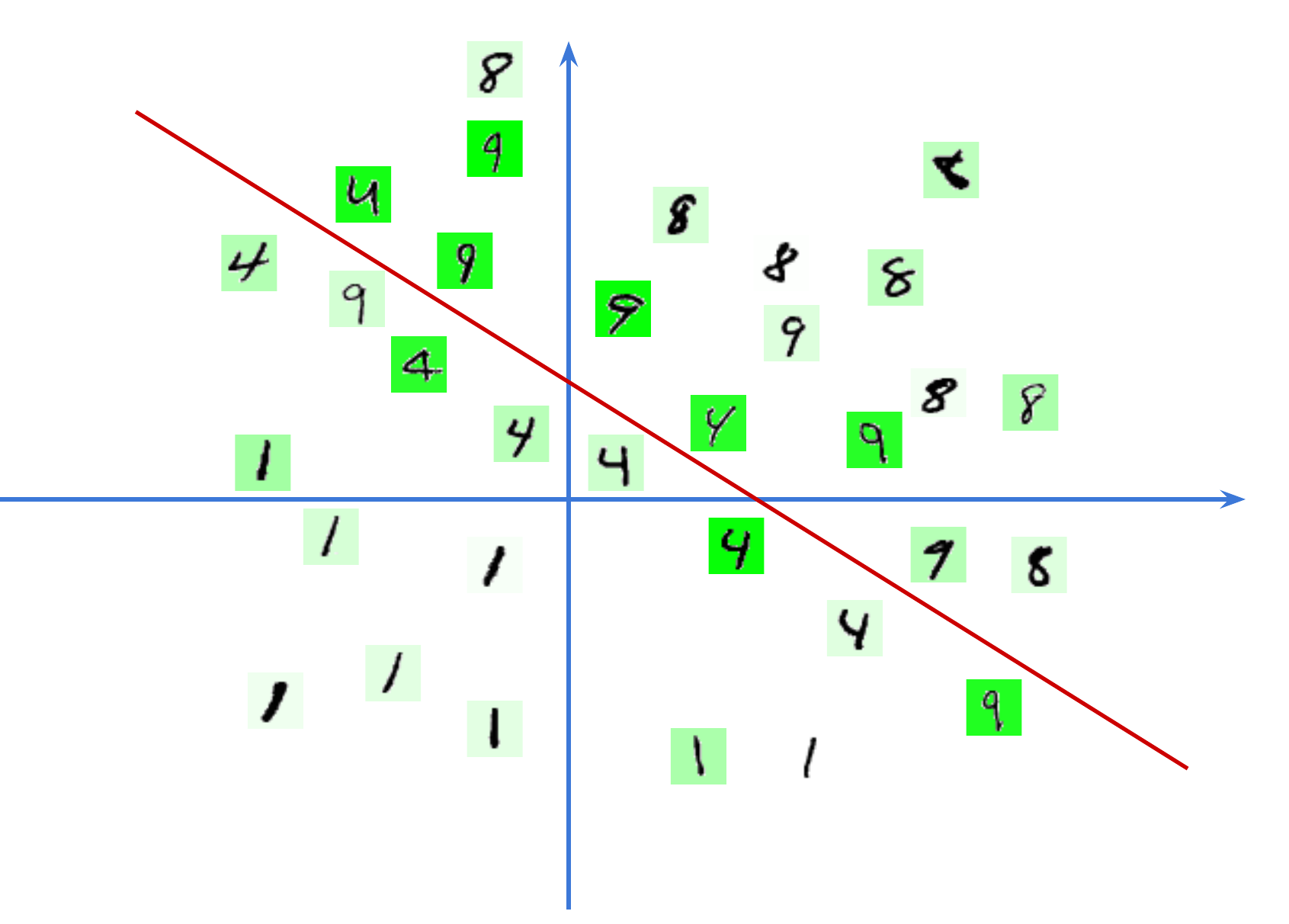}
 } 
 \caption{\textbf{Dataset.}
  \small
  Data points are images of handwritten digits, each data point consisting of a feature vector $x$ of grayscale pixels and a label $y$, the digit it depicts. We use the MNIST handwritten digit dataset (http://yann.lecun.com/exdb/mnist/).
 The algorithm is asked to distinguish between two ``categories'' of digits, where ``positive'' examples are digits $9$ and $8$ and ``negative'' examples are $1$ and $4$ (all other digits are not used). The number of training examples is $T = 8503$.
 This task allows us to adjust the correlations by drawing costs differently for different digits.}
 \label{fig:sim-dataset}
\end{figure}

\begin{figure}[h]
 \centering
 \subfloat[A comparison of mechanisms. ``Naive'' offers a maximum price of $1$ to every arrival until out of budget. ``Ours'' is Mechanism \ref{mech:stat-learning}, with $K$ initialized to $0$ and then adjusted online according to the estimated average $\algcost$ on the data so far. ``Baseline'' obtains every data point (has no budget constraint). Costs are distributed uniform $(0,1)$ independently. Each datapoint is an average of $4000$ trials, with standard error of at most $0.0002$.]{
  \includegraphics[width=0.47\textwidth]{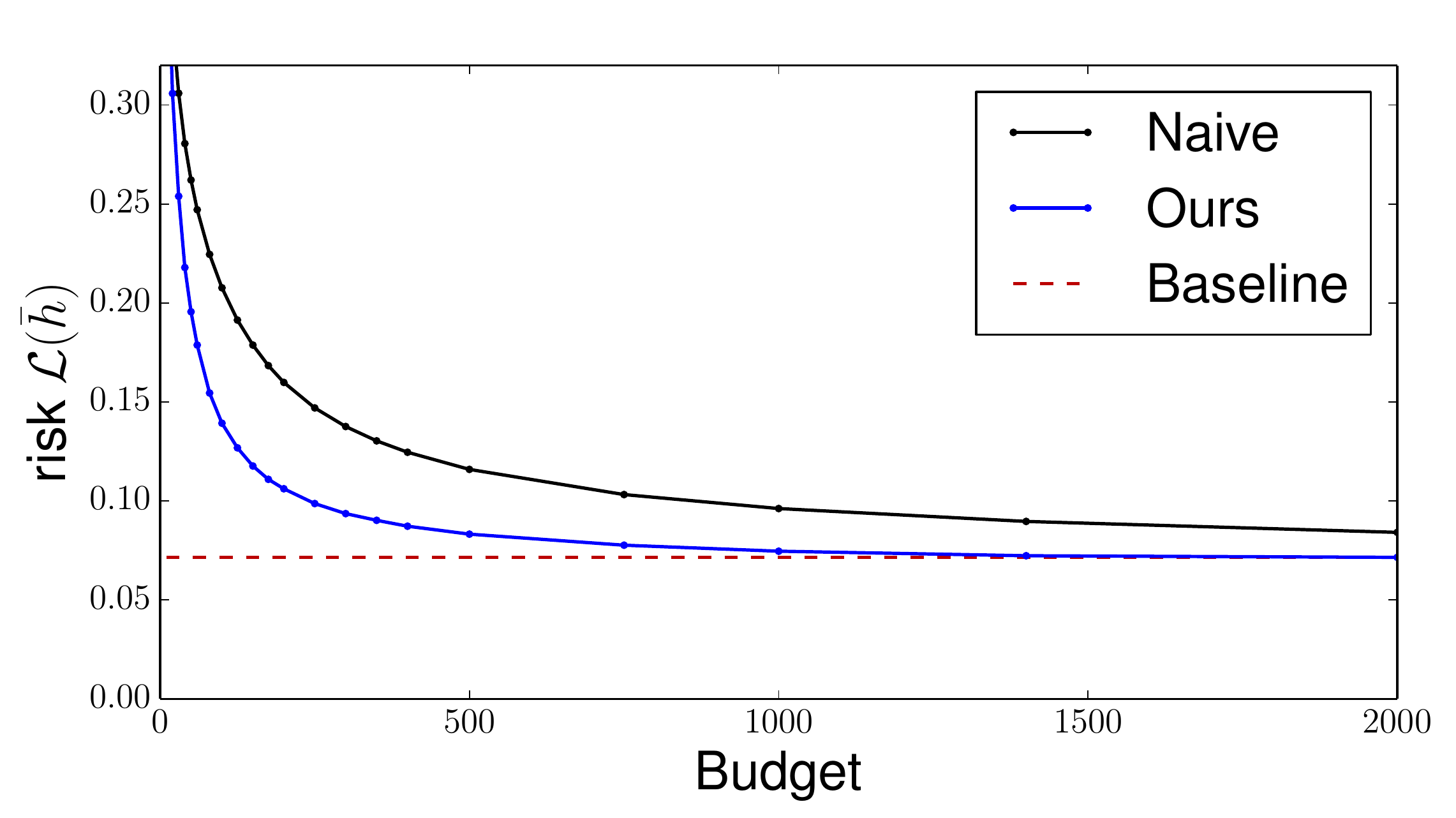}
 }
 \hfill
 \subfloat[An illustration of the role of cost-data correlations. The marginal distribution of costs is $1$ with probability $0.2$ and free otherwise, but the correlation of cost and data changes. The performance of Naive and the Baseline do not change with correlations. The larger-$\algcost$ case has high-cost points consisting of only $4$s and $9$s, while $\algcost$ is smaller when costs and data are independent. Each datapoint is an average of $2000$ trials, with standard error of at most $0.0004$.]{
  \includegraphics[width=0.47\textwidth]{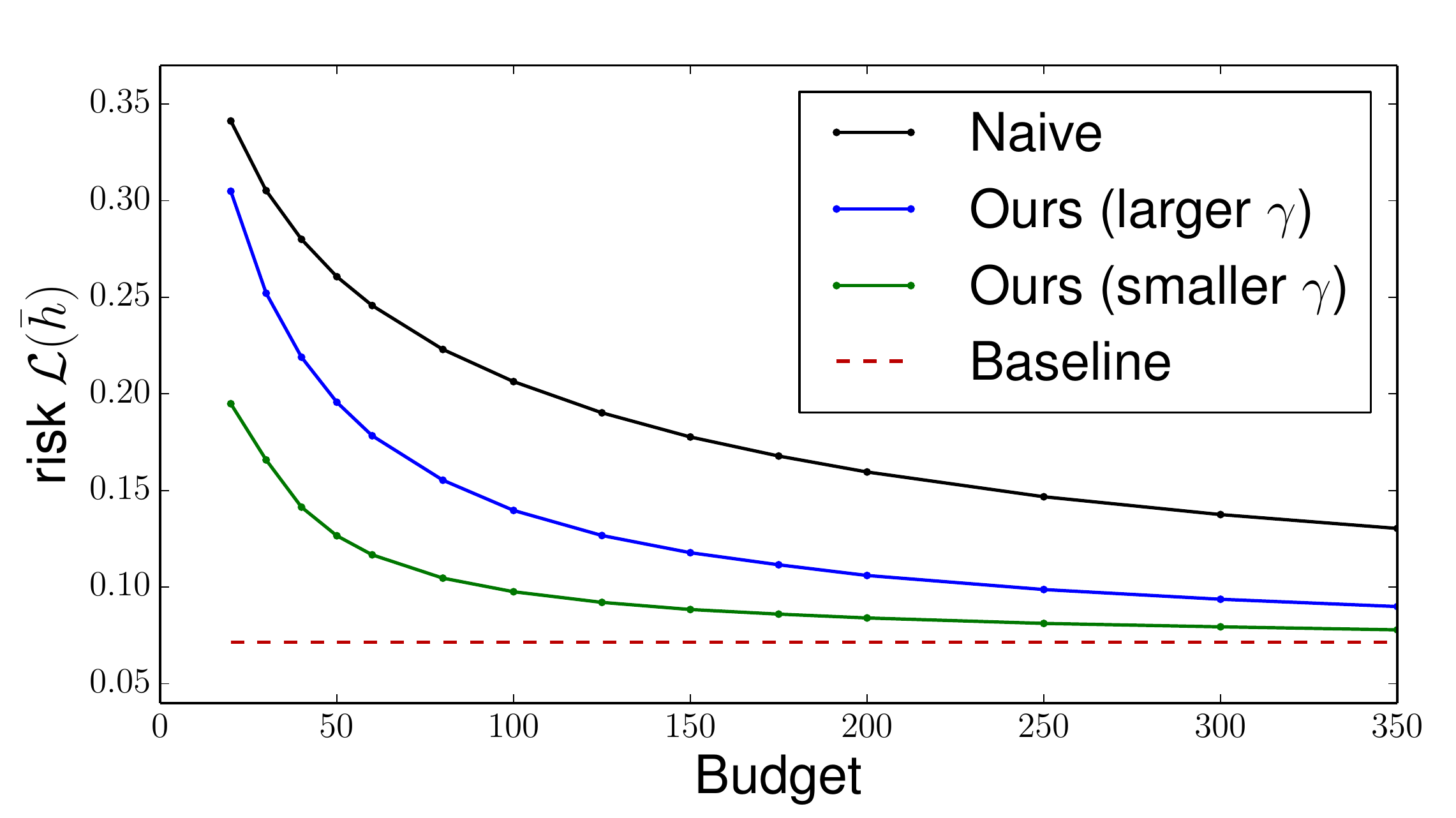}
 }
 \caption{\textbf{Examples of mechanism performance.}}
 \label{fig:sim-compare-prior}
\end{figure}

The hypothesis is a hyperplane classifier, \emph{i.e.} vector $w$ where the example is classified as positive if $w\cdot x \geq 0$ and negative otherwise; the risk is therefore the error rate (fraction of examples misclassified).
 For the implementation of the online gradient descent algorithm, we use a ``convexified'' loss function, the well-known hinge loss: $\loss(w,(x,y)) = \max\{0, 1 - y(w\cdot x)\}$ where $y \in \{-1,1\}$. 

In our simulations, we give each mechanism access to the exact same implementation of the Online Gradient Descent algorithm, including the same parameter $\eta$ chosen to be $0.1/c$ where $c$ is the average norm of the data feature vectors.
We train on a randomly chosen half of the dataset and test on the other half.

The ``baseline'' mechanism has no budget cap and purchases every data point. The ``naive'' mechanism offers a maximum price of $1$ for every data point until out of budget. ``Ours'' is an implementation of Mechanism \ref{mech:stat-learning}. We do not use any prior knowledge of the costs at all: We initialize $K=0$ and then adjust $K$ online by estimating $\algcost$ from the data purchased so far. (For a symmetric comparison, we do not adjust $\eta$ accordingly; instead we leave it at the same value as used with the other mechanisms.)
The examples are shown in Figure \ref{fig:sim-compare-prior}.

\fi  

\section{Discussion and Conclusion}
\label{sec:conclusion}
\subsection{Agent-Mechanism Interaction Model}
Our model of interaction, while perhaps the simplest initial starting point, involves some subtleties that may be interesting to address in the future.
A key property is that we need to obtain both an arriving agent's data point $z$ and her cost $c$.
The reason is that the cost is used to importance-weight the data based on the probability of picking a price larger than that cost. (The cost report is also required by \cite{RothSchoenebeck12} for the same reason.)
As discussed in Section \ref{sec:stat-learning}, a na\"{i}ve implementation of this model is incentive-compatible but not strictly so.
Exploring implementations, such as the trusted third party approach mentioned, is an interesting direction.
For instance, in a strictly truthful implementation, the arriving agent can cryptographically commit to a bid, \emph{e.g.} by submitting a cryptographic hash of her cost.
Then the prices are posted by the mechanism.
If the agent accepts, she reveals her data and her cost, verifying that the cost hashes to her commitment.
It is strictly truthful for the agent to commit to her true cost.

This paper focused on the learning-theoretic aspects of the problem, but exploring the model further or proposing alternatives is also of interest for future work.

\subsection{Conclusions and Directions}
The contribution of this work was to propose an \emph{active} scheme for learning and pricing data as it arrives online, held by strategic agents.
The active approach allows learning from past data and selectively pricing future data.
Our mechanisms interface with existing no-regret algorithms in an essentially black-box fashion (although the proof depends on the specific class of algorithms).
The analysis relies on showing that they have good guarantees in a model of no-regret learning with purchased data.
This no-regret setting may be of interest in future work, to either achieve good guarantees with no foreknowledge at all other than the maximum cost, or to propose variants on the model.

The no-regret analysis means our mechanisms are robust to adversarial input. But in nicer settings, one might hope to improve on the guarantees. One direction is to assume that costs are drawn according to a known marginal distribution (although the correlation with the data is unknown).
A combination of our approach and the posted-price distributions of \citet{RothSchoenebeck12} may be fruitful here.

Broadly, the problem of purchasing data for learning has many potential models and directions for study. One motivating setting, closer to crowdsourcing, is an active problem where data points consist of pairs (example, label) and the mechanism can offer a price for anyone who obtains the label of a given example. In an online arrival scheme, such a mechanism could build on the importance-weighted active learning paradigm~\citep{IWAL09}.

\section*{Acknowledgments}
The authors thank Mike Ruberry for discussion and formulation of the problem. Thanks to the organizers and participants of the 2014
Indo-US Lectures Week in Machine Learning, Game Theory and Optimization, Bangalore.

We thank the support of the National Science Foundation under awards CCF-1301976 and IIS-1421391.  Any opinions, findings, conclusions, or recommendations expressed here are those of the authors alone.

\newpage

{
\bibliographystyle{plainnat}
\bibliography{importancepricing}
}

\ifnum\arxiv=1  
\newpage
\appendix
\section*{Appendix}

\section{Tools for Converting Regret-Minimizing Algorithms}

\begin{lemma}[Lemma \ref{lemma:ftrl-iwtd-regret}]
\lemmaftrliwtdregret
\end{lemma}
\begin{proof}
\def\hstar{\mathbf{h^*}}
Let $\hstar = \inf_{h\in\H} \sum_t f_t(h)$.
We wish to prove that
 \[ \E_{\{h_t,q_t\}} \sum_t f_t(h_t) \leq \sum_t f_t(\hstar) ~+~ R \]
where $\{h_t,q_t\}$ is shorthand for $\{h_1,q_1,\dots,h_T,q_T\}$ and
 \[ R = \frac{\beta}{\eta} + 2 \eta \E_{\{h_t,q_t\}} \left[ \sum_t \frac{\Delta_{h_t,f_t}^2}{q_t} \right] . \]
As a prelude, note that in general these expectations could be quite tricky to deal with.
We consider a fixed input sequence $f_1,\dots,f_T$, but each random variable $q_t,h_t$ depends on the prior sequence of variables and outcomes.
However, we will see that the nice feature of the importance-weighting technique of Algorithm \ref{alg:iwtd-online} helps make this problem tractable.

Some preliminaries: Define the importance-weighted loss function at time $t$ to be the random variable
 \[ \hat{f}_t(h) = \begin{cases} \frac{f_t(h)}{q_t}  & \text{obtain $f_t$}  \\ 0  & \text{o.w.} \end{cases} \]
Let $\1_t$ be the indicator random variable equal to $1$ if we obtain $f_t$, which occurs with probability $q_t$, and equal to $0$ otherwise.
Then notice that for any hypothesis $h$,
\begin{align}
 \hat{f}_t(h) &= \1_t \frac{f_t(h)}{q_t}  \nonumber \\
 \implies \E_{\1_t} \left[ \hat{f}_t(h) \mid q_t \right]  &= f_t(h) . \label{eqn:iwtd-fhat}
\end{align}
To be clear, the expectation is over the random outcome whether or not we obtain datapoint $f_t$ conditioned on the value of $q_t$; and conditioned on the value of $q_t$, by definition we obtain datapoint $f_t$ with probability $q_t$ and obtain the $0$ function otherwise.

Now we proceed with the proof.
For any method of choosing $q_1,\dots,q_T$ and any resulting outcomes of $\1_t$, Algorithm \ref{alg:iwtd-online} reduces to running the Follow-the-Regularized-Leader algorithm on the sequence of convex loss functions $\hat{f}_1,\dots,\hat{f}_T$.
Thus, by the regret bound proof for FTRL (Lemma \ref{lemma:ftrl-regret}), FTRL guarantees that for every fixed ``reference hypothesis'' $\mathbf{h} \in \H$:
\begin{equation}
 \sum_t \hat{f}_t(h_t) \leq \sum_t \hat{f}_t(\mathbf{h}) ~ + ~ \hat{R}  \nonumber 
\end{equation}
where
\begin{align*}
 \hat{R} &= \frac{\beta}{\eta} ~ + ~ 2 \eta \sum_t \Delta_{h_t,\hat{f}_t}^2  \\
         &= \frac{\beta}{\eta} ~ + ~ 2 \eta \sum_t \1_t \frac{\Delta_{h_t,f_t}^2}{q_t^2} .
\end{align*}
(Recall that $\Delta_{h,f} = \|\nabla f(h)\|_{\star}$.)
Now we will take the expectation of both sides, separating out the expectation over the choice of $q_t$, over $h_t$, and over $\1_t$:
\begin{align*}
 \sum_t \E_{h_t,q_t} \left[ \E_{\1_t}\left[ \hat{f}_t(h_t) \mid h_t,q_t\right] \right]
  &\leq \sum_t \E_{h_t,q_t} \left[ \E_{\1_t} \left[ \hat{f}_t(\mathbf{h}) \mid h_t, q_t\right] \right]  ~ + ~ \E_{\{h_t,q_t\}}\left[\E_{\{\1_t\}} \left[ \hat{R} \mid \{h_t,q_t\}\right]\right] .
\end{align*}
Use the importance-weighting observation above (\ref{eqn:iwtd-fhat}):
\begin{align*}
 \E_{\{h_t,q_t\}} \sum_t f_t(h_t)
  &\leq \sum_t f_t(\mathbf{h}) ~ + ~ R
\end{align*}
where
 \[ R = \frac{\beta}{\eta} ~ + ~ 2 \eta \E_{\{h_t,q_t\}} \left[ \sum_t \frac{\Delta_{h_t,f_t}^2}{q_t} \right] . \]
In particular, because this holds for every reference hypothesis $\mathbf{h}$, it holds for $\mathbf{h^*}$.
\end{proof}

\Omit{ 
\begin{lemma}[Lemma \ref{lemma:ftrl-iwtd-regret}]
\lemmaftrliwtdregret
\end{lemma}
\begin{proof}
\def\hstar{\mathbf{h^*}}
Let $\hstar = \inf_{h\in\H} \sum_t f_t(h)$.
We wish to prove that
 \[ \E \sum_t f_t(h_t) \leq \sum_t f_t(\hstar) ~+~ R \]
where
 \[ R = \frac{\beta}{\eta} + 2 \eta \E \sum_t \frac{\Delta_{h_t,f_t}^2}{q_t} . \]

First, for any choices of $q_1,\dots,q_T$,\footnote{
  We allow each $q_t$ to be chosen additive and to depend upon the prior arrivals and hypotheses. Because of this, this proof implicitly uses a martingale argument, where each $q_t$ and $h_t$ is determined by the previous time steps, and then the random outcome of whether to obtain the data point at time $t$ is \emph{independent} conditioned on all previous steps.
  Thus the expectations are all technically over the sequences of $q_t$, $h_t$, and the purchasing decision $\It$, where each is independent conditioned on all the previous ones.
  We elide these details.}
define the importance-weighted loss function at time $t$ to be the random variable
 \[ \hat{f}_t(h) = \begin{cases} \frac{f_t(h)}{q_t}  & \text{obtain $f_t$}  \\ 0  & \text{o.w.} \end{cases} \]
Let $\It$ be the indicator random variable equal to $1$ if we obtain $f_t$, which occurs with probability $q_t$, and equal to $0$ otherwise.
Now, for any choices of $q_1,\dots,q_T$ and outcomes of $\It$, Algorithm \ref{alg:iwtd-online} reduces to running the Follow-the-Regularized-Leader algorithm on the sequence of convex loss functions $\hat{f}_1,\dots,\hat{f}_T$.
Thus, by the regret bound proof for FTRL (Lemma \ref{lemma:ftrl-regret}), FTRL guarantees that for every $h \in \H$:
\begin{equation}
 \E \sum_t \hat{f}_t(h_t) \leq \E \sum_t \hat{f}_t(h) ~ + ~ \E \hat{R}  \label{eqn:iwtd-regret-ftrl}
\end{equation}
where
\begin{align*}
 \hat{R} &= \frac{\beta}{\eta} ~ + ~ 2 \eta \sum_t \Delta_{h_t,\hat{f}_t}^2  \\
         &= \frac{\beta}{\eta} ~ + ~ 2 \eta \sum_t \It \frac{\Delta_{h_t,f_t}^2}{q_t^2} .
\end{align*}
In particular, picking $h = \hstar$ in Equation \ref{eqn:iwtd-regret-ftrl},\footnote{Note that $\hstar$ does not in general equal the optimal hypothesis for the sequence $\hat{f}_1,\dots,\hat{f}_2$, and this is fine.}
\begin{align*}
\E\left[\sum_t \hat{f}_t(h_t)\right]
  &\leq \E\left[\sum_t \hat{f}_t(\hstar)\right] ~ + ~ \E[\hat{R}]   \\
\implies \sum_t f_t(h_t)
  &\leq \sum_t f_t(\hstar) ~ + ~ \E R .
\end{align*}
We used that $\It$ equals one with probability $q_t$ and zero otherwise.
\end{proof}
} 

\begin{lemma} \label{lemma:ftrl-regret}
\ignore{\cj{Maybe cite some reference? }}
Let $G$ be 1-strongly convex with respect to some norm $\|\cdot\|$. The regret
of Follow-The-Regularized-Leader algorithm with regularizer $G$ and convex loss
functions $f_1,\dots,f_T$ can be bounded by
\[ 
   \frac{\beta}{\eta} + 2\eta \sum_t \Delta^2_{h_t,f_t} ~ ,
\]
where $\beta$ is the upper bound of $G(\cdot)$.
\end{lemma}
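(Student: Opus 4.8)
The plan is to prove this by the standard ``Follow-the-Leader, Be-the-Leader'' analysis of FTRL, specialized to a strongly convex regularizer. Write $\Phi_t(h) = \sum_{s \le t} f_s(h) + \frac{1}{\eta} G(h)$ for the regularized objective after $t$ rounds, so that the FTRL iterate is $h_{t+1} = \arg\min_{h\in\H} \Phi_t(h)$ and $h_1 = \arg\min_{h\in\H} G(h)$. The first step is to establish the FTRL regret decomposition: for every reference point $u\in\H$,
\[
 \sum_{t=1}^T f_t(h_t) - \sum_{t=1}^T f_t(u) \;\le\; \tfrac1\eta\big(G(u)-G(h_1)\big) \;+\; \sum_{t=1}^T\big(f_t(h_t)-f_t(h_{t+1})\big).
\]
I would obtain this by treating the regularizer $\frac1\eta G$ as a fictitious ``round $0$'' loss and applying the Be-the-Leader lemma $\sum_{t=0}^T f_t(h_{t+1}) \le \sum_{t=0}^T f_t(u)$ (a one-line induction on $T$), then rearranging. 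The regularizer term is then immediate: since $G$ is nonnegative and bounded above by $\beta$, we have $\frac1\eta(G(u)-G(h_1)) \le \frac{\beta}{\eta}$, which is the first term of the claimed bound. Taking $u = h^*$ at the end handles the $\min_{h^*}$ in the regret definition.

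The heart of the argument is bounding the per-step stability term $f_t(h_t)-f_t(h_{t+1})$. By convexity of $f_t$, Hölder's inequality for the dual norm, and the definition $\Delta_{h_t,f_t} = \|\nabla f_t(h_t)\|_\star$,
\[
 f_t(h_t) - f_t(h_{t+1}) \;\le\; \langle \nabla f_t(h_t),\, h_t - h_{t+1}\rangle \;\le\; \Delta_{h_t,f_t}\,\|h_t - h_{t+1}\|,
\]
so it remains to control the step length $\|h_t - h_{t+1}\|$. Here I would invoke strong convexity from two sides. Since $G$ is $1$-strongly convex with respect to $\|\cdot\|$ and each $f_s$ is convex, $\Phi_{t-1}$ is $\frac1\eta$-strongly convex and is minimized at $h_t$, giving $\Phi_{t-1}(h_{t+1}) \ge \Phi_{t-1}(h_t) + \frac{1}{2\eta}\|h_{t+1}-h_t\|^2$. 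Combining this with the optimality $\Phi_t(h_{t+1}) \le \Phi_t(h_t)$ and the identity $\Phi_t = \Phi_{t-1} + f_t$ yields $\frac1{2\eta}\|h_{t+1}-h_t\|^2 \le f_t(h_t) - f_t(h_{t+1}) \le \Delta_{h_t,f_t}\,\|h_t-h_{t+1}\|$, and hence $\|h_t-h_{t+1}\| \le 2\eta\,\Delta_{h_t,f_t}$.

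Finally I would substitute this step-length bound back into the gradient inequality to get $f_t(h_t)-f_t(h_{t+1}) \le 2\eta\,\Delta_{h_t,f_t}^2$, sum over $t$, and add the regularizer term $\frac\beta\eta$ to arrive at the stated bound $\frac{\beta}{\eta} + 2\eta\sum_t \Delta_{h_t,f_t}^2$. I expect the main obstacle to be the stability estimate of the second paragraph: the two-sided optimality argument that converts $\frac1\eta$-strong convexity of $\Phi_{t-1}$ into a linear-in-$\eta$ bound on $\|h_t - h_{t+1}\|$, which is exactly what produces the favorable $\Delta_{h_t,f_t}^2$ dependence rather than a cruder $\Delta_{h_t,f_t}$ dependence. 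The Be-the-Leader induction also requires mild care in correctly folding the regularizer into the fictitious initial round, but the genuinely delicate calculation is the step-length control.
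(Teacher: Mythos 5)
Your proposal is correct and follows essentially the same route as the paper: the Follow-the-Leader/Be-the-Leader decomposition with the regularizer folded in as a round-zero loss, followed by the strong-convexity stability bound $\frac{1}{2\eta}\|h_t-h_{t+1}\|^2 \le f_t(h_t)-f_t(h_{t+1}) \le \Delta_{h_t,f_t}\|h_t-h_{t+1}\|$. The only cosmetic differences are that you apply strong convexity to $\Phi_{t-1}$ at its minimizer $h_t$ where the paper applies it to $\Phi_t$ at $h_{t+1}$, and you extract an explicit step-length bound before substituting back rather than squaring the combined inequality directly; both yield the identical per-step bound $2\eta\,\Delta_{h_t,f_t}^2$.
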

\begin{proof}
We reproduce the standard proof.
First, the regret of Follow-The-Regularized-Leader can be bounded by
\[
   \frac{1}{\eta} (R(h_T)-R(h_1)) + \sum_{t=1}^T (\loss(h_t,f_t)-\loss(h_{t+1},f_t)) .
\]

Below we show that $\loss(h_t,f_t)-\loss(h_{t+1},f_t) \leq 2\eta \|\nabla \loss(h_t,f_t)\|^2_\star$.

Define $\Phi_t(h)=R(h)/\eta + \sum_{i=1}^t \loss(h,f_i)$. 
By definition, we know $h_t = \arg\min_h \Phi_{t-1}(h)$. 
Since $\loss(\cdot)$ is convex and $R(\cdot)$ is 1-strongly convex, we know $\Phi_t(\cdot)$ is $(1/\eta)$-strongly convex for all $t$. Therefore, since $h_{t+1}$ minimizes $\Phi_t$, by definition of strong convex, we get
\[
   \Phi_t(h_t) \geq \Phi_t(h_{t+1}) + \frac{1}{2\eta} \| h_t-h_{t+1} \|^2
\]

After simple manipulations, we get
\begin{align*}
   \|h_t - h_{t+1}\|^2 & \leq 2\eta (\Phi_t(h_t) - \Phi_t(h_{t+1}))\\
	&= 2\eta (\Phi_{t-1}(h_t) - \Phi_{t-1}(h_{t+1})) + 2\eta(\loss(h_{t},f_t)-\loss(h_{t+1}, f_t))\\
	&\leq 2\eta(\loss(h_{t},f_t)-\loss(h_{t+1}, f_t))
\end{align*}
The last inequality comes from the fact that $h_t$ is the minimizer of $\Phi_{t-1}$. 

Since $\loss(\cdot)$ is convex, we have
\begin{align*}
   \loss(h_t, f_t) - \loss(h_{t+1}, f_t) 
	&\leq (h_t-h_{t+1}) \nabla \loss(h_t,f_t)\\
	&\leq \|h_t-h_{t+1}\| \|\nabla \loss(h_t,f_t)\|_\star
\end{align*}
The last inequality comes from the generalized Cauchy-Schwartz inequality.

Combining the above two inequalities together, we get
\begin{align*}
   \loss(h_t, f_t) - \loss(h_{t+1}, f_t) 
	\leq \|\nabla \loss(h_t,f_t)\|_\star \sqrt{2\eta(\loss(h_{t},f_t)-\loss(h_{t+1}, f_t))}
\end{align*}

By squaring and shifting sides,
\begin{align*}
   \loss(h_t, f_t) - \loss(h_{t+1}, f_t) 
	\leq 2\eta\|\nabla \loss(h_t,f_t)\|_\star^2 
\end{align*}
The proof is completed by inserting the inequality into the regret bound.
\end{proof}


\section{No regret ``at-cost'' setting}

\subsection{At-cost upper bounds}

\begin{lemma}[Lemma \ref{lemma:known-optimal-strategy}]
\lemmaknownoptimalstrategy
\end{lemma}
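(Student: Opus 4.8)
The plan is to formulate the choice of sampling probabilities as a constrained convex program and solve it via Lagrangian duality. In the ``see-then-choose'' model, after observing $(c_t,f_t)$ we pick a probability $q_t$ of purchasing and pay $c_t$ when we do, so the expected spend is $\E\sum_t c_t q_t$, which must not exceed $B$. By Lemma \ref{lemma:ftrl-iwtd-regret}, the only part of the regret bound depending on the $q_t$ is $2\eta\,\E\sum_t \Delta_{h_t,f_t}^2/q_t$, and the prefactor $2\eta > 0$ does not affect the minimizer. So, abbreviating $\Delta_t := \Delta_{h_t,f_t}$, I would minimize $\E\sum_t \Delta_t^2/q_t$ subject to $\E\sum_t c_t q_t \leq B$ and $0 < q_t \leq 1$.

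First I would observe that $q \mapsto \Delta_t^2/q$ is convex and strictly decreasing on $(0,\infty)$ while the budget constraint is linear, so the feasible set is convex and the objective pushes the $q_t$ upward until the budget binds. Introducing a single multiplier $\lambda \geq 0$ for the budget constraint, the problem decouples across realizations: the optimal $q_t$ minimizes $\Delta_t^2/q_t + \lambda c_t q_t$ over $q_t \in (0,1]$. Setting the derivative to zero gives $\Delta_t^2/q_t^2 = \lambda c_t$, i.e.\ $q_t = \Delta_t/(\sqrt{\lambda}\sqrt{c_t})$; imposing the box constraint truncates this at $1$, yielding $q_t = \min\{1,\ \Delta_t/(K^*\sqrt{c_t})\}$ with $K^* := \sqrt{\lambda}$, exactly the claimed form.

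To pin down $K^*$, I would invoke complementary slackness: since the objective is decreasing in each $q_t$, the budget constraint is tight at the optimum, so $\E\sum_t c_t q_t = B$. In the regime where the cap $q_t = 1$ is inactive, substituting $q_t = \Delta_t/(K^*\sqrt{c_t})$ gives $\frac{1}{K^*}\E\sum_t \Delta_t\sqrt{c_t} = B$. By the definition $\algcost = \E\frac{1}{T}\sum_t \Delta_t\sqrt{c_t}$ (Definition \ref{def:gamma}), this rearranges to $K^* = \frac{1}{B}\E\sum_t \Delta_t\sqrt{c_t} = \frac{T}{B}\algcost$, the stated normalization.

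The main obstacle I anticipate is handling the truncation at $q_t = 1$ cleanly: the closed form $K^* = \frac{T}{B}\algcost$ is exact only when no probability saturates the cap, which is why the statement reads $K^* \approx \frac{T}{B}\algcost$. A fully rigorous treatment would split the indices into the saturated set $\{t : \Delta_t/(K^*\sqrt{c_t}) \geq 1\}$ and its complement, verify KKT stationarity and complementary slackness on each piece, and argue that the cap only perturbs $K^*$ by a constant factor---which suffices downstream, since the subsequent regret bounds need $K^*$ only up to a constant. A secondary point to check is that the Lagrangian decoupling remains valid when the $q_t$ are (history-dependent, randomized) functions of the observations; this holds because the expectation is linear and the per-realization minimization can be performed pointwise.
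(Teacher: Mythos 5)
Your proposal is correct and follows essentially the same route as the paper: formulate the regret-bound minimization as a convex program with an expected-budget constraint, solve the KKT conditions to get $q_t = \min\{1, \Delta_{h_t,f_t}/(\sqrt{\lambda c_t})\}$ with $K^* = \sqrt{\lambda}$, and use complementary slackness on the (tight) budget constraint to identify $K^* \approx \frac{T}{B}\algcost$, with the saturated set $\{t: q_t = 1\}$ accounting for the approximation. The paper handles the box constraints with explicit multipliers $\alpha_t$ rather than direct truncation, and records the exact constant $K^* = \frac{1}{B-\sum_{t\notin S}c_t}\sum_{t\in S}\Delta_{h_t,f_t}\sqrt{c_t}$, but these are presentational differences only.
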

\begin{proof}
Recall that the regret bound of Lemma \ref{lemma:ftrl-iwtd-regret} is
 \[ \frac{\beta}{\eta} + 2\eta \E \sum_t \frac{\Delta_{h_t,f_t}^2}{q_t}  \]
where $q_t$ is the probability with which we choose to purchase arrival $(c_t,f_t)$.
We will solve for the choices of $q_t$ for each $t$.

Since $\beta$ is a constant and $\eta$ a parameter to be tuned later, our problem is to minimize the summation term in this regret bound.
This yields the following optimization problem:
\begin{align*}
   \min_{q_t} \sum_t \frac{\Delta_{h_t,f_t}^2}{q_t} &  \\
   \text{s.t.} \qquad \sum_t q_t \cdot c_t &\leq B  \\
   \qquad q_t &\leq 1  \qquad (\forall t) .
\end{align*}
The first constraint is the expected budget constraint, as we take each point $(c_t,f_t)$ with probability $q_t$ and pay $c_t$ if we do.
The second constrains each $q_t$ to be a probability.

To be completely formal, our goal is to minimize the expectation of the summation in the objective, as each $h_t$ and $q_t$ are random variables (they depend on the previous steps).
However, our approach will be to optimize this objective pointwise: For every prior sequence $h_1,\dots,h_t$ and $q_1,\dots,q_{t-1}$, we pick the optimal $q_t$. Therefore in the proof we will elide the expectation operators and argument. Similarly, since the budget constraint holds for all choices of $q_t$ that we make, we elide the expectation over the randomness in $q_t$.

The Lagrangian of this problem is
 \[ L(\lambda,\{q_t, \alpha_t\}) = \sum_t \frac{\Delta_{h_t,f_t}^2}{q_t} + \lambda \left(\sum_t q_t \cdot c_t ~ - ~ B\right) ~ + ~ \sum_t \alpha_t \left(q_t - 1\right) \]
with each $\lambda, q_t, \alpha_t \geq 0$. At optimum,
\begin{align*}
 0 &= \frac{\partial L}{\partial q_t}  \\
   &= -\frac{\Delta_{h_t,f_t}^2}{q_t^2} + \lambda c_t + \alpha_t ,
\end{align*}
implying that
 \[ q_t = \frac{\Delta_{h_t,f_t}}{\sqrt{\lambda c_t + \alpha_t}} . \]
By complementary slackness, $\alpha_t(q_t - 1) = 0$ at optimum, so consider two cases.
If $\alpha_t > 0$, then $q_t = 1$.
On the other hand, if $q_t < 1$, then $\alpha_t = 0$. Thus we may more simply write
 \[ q_t = \min\left\{ 1 ~,~ \frac{\Delta_{h_t,f_t}}{\sqrt{\lambda c_t}} \right\} . \]
Therefore, our normalization constant $K^* = \sqrt{\lambda}$.
To solve for $\lambda$, by complementary slackness, $\lambda\left(\sum_t q_t \cdot c_t ~ - ~ B\right) = 0$.
If $\lambda = 0$, then the form of $q_t$ and prior discussion implies that all $q_t = 1$, and we have $\sum_t c_t \leq B$; in other words, we have enough budget to purchase every point. Otherwise, the budget constraint is tight and $\sum_t q_t \cdot c_t = B$, so
 \[ \sum_t c_t \cdot \min\left\{ 1 ~,~ \frac{\Delta_{h_t,f_t}}{\sqrt{\lambda c_t}} \right\} = B . \]
Let us call those points that are taken with provability $q_t = 1$ ``valuable'' and the others ``less valuable'', and let $S$ be the set of less valuable points, $S = \{t : q_t < 1\}$.
Then we can rewrite as
 \[ \sum_{t\not\in S} c_t ~ + ~ \sum_{t\in S} \frac{\Delta_{h_t,f_t} \sqrt{c_t}}{\sqrt{\lambda}} = B , \]
so
 \[ K^* = \sqrt{\lambda} = \frac{1}{B - \sum_{t\not\in S} c_t}\sum_{t\in S} \Delta_{h_t,f_t}\sqrt{c_t} . \]
This completes the proof.
Let us make several final comments and observations, however.
First, if the budget is small relative to the amount of data, then with Lipschitz loss functions, no data points will be taken with probability $q_t = 1$, so $S$ will equal all of $T$.
In this case, the expectation of $K^*$ is exactly $\frac{T}{B}\algcost$, which is the meaning of our informal statement $K^* \approx \frac{T}{B}\algcost$.

Second, this $K^*$ is optimal ``pointwise'', in that it includes advance knowledge of which data points will be taken and which hypotheses will be posted.
However, notice that, to satisfy the budget constraint, it suffices to take the expectation and choose a normalization constant
 \[ K = \E\left[\frac{1}{B - \sum_{t\not\in S} c_t}\sum_{t\in S} \Delta_{h_t,f_t}\sqrt{c_t}\right] . \]
Third, as noted above, the extreme case is when all $q_t < 1$ and in this case the above $K = \frac{T}{B}\algcost$ exactly.
While this will not be ``as optimal'' for the specific random outcomes of this sequence, it will suffice to prove good upper bounds on regret. Furthermore, it holds that \emph{any} choice of $K \geq \frac{T}{B}\algcost$ satisfies the expected budget constraint; and (by setting $\eta$ as a function of $K$) suffices to prove an upper bound on regret.
\end{proof}

\begin{theorem}[Theorem \ref{theorem:noregret-nonstrategic}]
\theoremnoregretnonstrategic
\end{theorem}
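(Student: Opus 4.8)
The plan is to assemble all three claims from ingredients already established. For the pricing and budget-feasibility claims (part 3), I would instantiate the sampling probabilities that Lemma~\ref{lemma:known-optimal-strategy} identifies as optimal, namely $q_t = \min\{1, \Delta_{h_t,f_t}/(K\sqrt{c_t})\}$ with $K = \frac{T}{B}\algcost$, and realize them by posting prices drawn from the distribution of Observation~\ref{observation:pricing-pdf}, which induces exactly these $q_t$ for every possible value of $c_t$ without the mechanism knowing $c_t$ in advance. Since in the at-cost variant we pay $c_t$ upon purchase, the expected spend at step $t$ is $q_t c_t = \min\{c_t, \Delta_{h_t,f_t}\sqrt{c_t}/K\} \le \Delta_{h_t,f_t}\sqrt{c_t}/K$; summing and taking expectations gives $\E\sum_t q_t c_t \le \frac{1}{K}\E\sum_t \Delta_{h_t,f_t}\sqrt{c_t} = \frac{T\algcost}{K} = B$ by Definition~\ref{def:gamma}, verifying the expected budget constraint.

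For the regret bound (part 1), I would feed these induced $q_t$ into the importance-weighted guarantee of Lemma~\ref{lemma:ftrl-iwtd-regret} and control the summation $\E\sum_t \Delta_{h_t,f_t}^2/q_t$. The key elementary step is that $\Delta_{h_t,f_t}^2/q_t = \max\{\Delta_{h_t,f_t}^2,\, K\sqrt{c_t}\,\Delta_{h_t,f_t}\} \le \Delta_{h_t,f_t}^2 + K\Delta_{h_t,f_t}\sqrt{c_t}$. Combining the Lipschitz bound $\Delta_{h_t,f_t}\le 1$ (so $\sum_t \Delta_{h_t,f_t}^2 \le T$) with Definition~\ref{def:gamma} yields $\E\sum_t \Delta_{h_t,f_t}^2/q_t \le T + KT\algcost = T + \frac{T^2}{B}\algcost^2$. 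Writing $M$ for (twice) this quantity, the regret expression $\frac{\beta}{\eta} + \eta M$ is minimized at $\eta = \Theta(1/\sqrt{M})$, giving regret $O(\sqrt{M}) = O\!\left(\sqrt{T} + \frac{T}{\sqrt{B}}\algcost\right) = O\!\left(\max\{\sqrt{T}, \frac{T}{\sqrt{B}}\algcost\}\right)$, as claimed. An estimate of $\algcost$ within a constant factor is all that is needed to set $K$ and $\eta$, since scaling $K$ by a constant perturbs the budget and the regret bound only by constants.

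Optimality (part 2) follows immediately from the lower bound of Theorem~\ref{theorem:noregret-known-lower-gamma}, which gives $\Omega\!\left(\frac{T}{\sqrt{B}}\algcost\right)$, together with the classical $\Omega(\sqrt{T})$ regret lower bound; these match the upper bound up to constant factors. I expect the main subtlety to be bookkeeping rather than a deep obstacle: $q_t$, $h_t$, and hence $\algcost$ are random and history-dependent, so both the budget and regret calculations are really over a martingale-type filtration. Following the pointwise-optimization argument already used in the proof of Lemma~\ref{lemma:known-optimal-strategy}, the resolution is that choosing $K = \frac{T}{B}\algcost$ in expectation suffices both for the expected-budget constraint and for the regret bound, which dissolves the apparent circularity between the algorithm-dependent parameter $\algcost$ and the parameter choices $K,\eta$ that depend on it.
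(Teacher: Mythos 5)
Your proposal is correct and follows essentially the same route as the paper's own proof: the same pricing distribution realizing $q_t = \min\{1, \Delta_{h_t,f_t}/(K\sqrt{c_t})\}$ with $K = \frac{T}{B}\algcost$, the same budget calculation $\E\sum_t q_t c_t \le T\algcost/K$, the same plug-in of $\E\sum_t \Delta_{h_t,f_t}^2/q_t \le T + \frac{T^2}{B}\algcost^2$ into Lemma~\ref{lemma:ftrl-iwtd-regret} with $\eta = \Theta(1/\sqrt{M})$, and the same appeal to Theorem~\ref{theorem:noregret-known-lower-gamma} for optimality. Your use of $\max\{a,b\}\le a+b$ in place of the paper's explicit split into the set $S=\{t: q_t<1\}$ and its complement is only a cosmetic difference, and your closing remark about the history-dependence of $q_t,h_t$ matches the pointwise-optimization treatment the paper itself uses.
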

\begin{proof}
The lower bound proof appears in Theorem \ref{theorem:noregret-known-lower-gamma}.

For the upper bound, we will give a more careful argument first, obtaining a more subtle bound capturing the two extremes in the regret bound as well as the spectrum in between.
We will then simplify to get the theorem statement.

First, note as pointed out in the proof of Lemma \ref{lemma:known-optimal-strategy} that choosing any $K \geq \frac{T}{B}\algcost \geq \E[K^*]$ satisfies the expected budget constraint, as each probability of purchase $q_t$ only decreases.
We now just need to show that if we know $\algcost$ to within a constant factor larger, \emph{i.e.} set $K = O\left(\frac{T}{B}\algcost\right)$ and $\eta$ appropriately, then we achieve the regret bound.

By Lemma \ref{lemma:ftrl-iwtd-regret}, for any choices of $q_t$ and the learning parameter $\eta$, the regret bound satisfies
\begin{equation}
 Regret \leq \frac{\beta}{\eta} + 2\eta \E \sum_t \frac{\Delta_{h_t,f_t}^2}{q_t}  \label{eqn:regret-bound-with-eta}
\end{equation}
where $\beta$ is a constant. Our strategy is to set
 \[ q_t = \min\left\{ 1 ~,~ \frac{\Delta_{h_t,f_t}}{K \sqrt{c_t}} \right\} . \]
Recall from the proof of Lemma \ref{lemma:known-optimal-strategy} that in the optimal solution there were in general ``valuable'' points for which the probability of purchase was $q_t = 1$ and ``less-valuable'' points where $q_t < 1$.
We had $S = \{t : q_t < 1\}$.
Thus the summation term in the regret bound becomes
\begin{equation}
 \E \sum_{t\not\in S} \Delta_{h_t,f_t}^2 ~ + ~ \E \sum_{t \in S} \Delta_{h_t,f_t} \sqrt{c_t} K . \label{eqn:regret-sum-bound}
\end{equation}
Before we prove the theorem statement, let us show how to achieve the more subtle bound.
So for the sake of this argument, let $\algcost(S) = \frac{1}{|S|} \E \sum_{t\in S} \Delta_{h_t,f_t} \sqrt{c_t}$.
Let $K_S$ approximate the more precise form derived in the proof of Lemma \ref{lemma:known-optimal-strategy}; that is,
 \[ K_S = O\left(\frac{|S|}{B - \sum_{t\not\in S} c_t} \algcost(S)\right) . \]
Then the summation term of the regret bound (Expression \ref{eqn:regret-sum-bound}) is at most a constant times
\begin{align}
  &\sum_{t\not\in S} \Delta_{h_t,f_t}^2 ~ + ~ \frac{|S|^2}{B - \sum_{t\not\in S}c_t} \algcost(S)^2  \nonumber \\
  &\leq T - |S| ~+~ + ~ \frac{|S|^2}{B - \sum_{t\not\in S}c_t} \algcost(S)^2   \label{eqn:exact-choice-eta}
\end{align}
as each $\Delta_{h_t,f_t} \leq 1$.
It remains to select the parameter $\eta$ to use for the learning algorithm and plug into the original regret bound, Expression \ref{eqn:regret-bound-with-eta}.
If the algorithm has an accurate estimate of $K_S$, $|S|$, and $\sum_{t\not\in S} c_t$, then it can set $\eta$ equal to the square root of one over Expression \ref{eqn:exact-choice-eta}.
(Note this may be achievable by tuning $\eta$ online as well, perhaps even with a theoretical guarantee.)
In this case, the regret bound is
 \[ Regret \leq O\left(\sqrt{T - |S| ~ + ~ \frac{|S|^2}{B - \sum_{t\not\in S}c_t}\algcost(S)^2}\right) . \]
Note that as $B \to 0$, $|S| \to T$, and as $B \to \sum_t c_t$, $|S| \to 0$.

Now let us actually prove the Theorem as stated. Let $\algcost = \sum_t \Delta_{h_t,f_t} \sqrt{c_t}$ and let $K = \frac{T}{B}\algcost$.
The summation term in the regret bound, Expression \ref{eqn:regret-sum-bound}, is upper-bounded by
\begin{align*}
 &T + (T\algcost) K  \\
 &= T + \frac{T^2}{B}\algcost^2
\end{align*}
using that $T\algcost \geq \sum_{t\in S}\Delta_{h_t,f_t} \sqrt{c_t}$ since it is a summation over more (positive) terms.
Now by Expression \ref{eqn:regret-sum-bound},
 \[ Regret \leq \frac{\beta}{\eta} + 2\eta\left(T + \frac{T^2}{B}\algcost^2\right) . \]
Setting
 \[ \eta = \Theta\left(1 / \max\left\{ \sqrt{T} ~,~ \frac{T}{\sqrt{B}}\algcost \right\}\right) \]
gives a regret bound of the order of $1/\eta$.
\end{proof}

\Omit{ 
\begin{theorem}
\bo{at-cost, less knowledge}
\end{theorem}
\begin{proof}
First, suppose that we have approximate knowledge of $\bar{c} = \frac{1}{T} \sum_t \sqrt{c_t}$ and set
 \[ \frac{T}{B} \bar{c} \leq K \leq O\left(\frac{T}{B}\bar{c}\right) . \]
The expected spend, using the Lipschitz assumption on the loss function (so that $\Delta_{h_t,f_t} \leq 1$), is
\begin{align}
 \sum_t q_t \cdot c_t
  &=\sum_t \min\left\{ 1 ~,~ \frac{\Delta_{h_t,f_t}}{K \sqrt{c_t}} \right\} c_t \nonumber \\
  &\leq \sum_t \frac{\sqrt{c_t}}{K}  \label{eqn:noregret-known-budget-line} \\
  &\leq B \nonumber
\end{align}
if $K \geq \frac{1}{B} \sum_t \sqrt{c_t}$.
Meanwhile, if $K \leq \frac{T}{B}O(\bar{c})$,
\begin{align}
 &\frac{\beta}{\eta} ~ + ~ \eta \sum_t \frac{\Delta_t^2}{q_t(c_t,f_t)}  \nonumber \\
 &= \frac{\beta}{\eta} ~ + ~ \eta \sum_{t\not\in S} \Delta_t^2 ~ + ~ \eta \sum_{t\in S} \Delta_t \sqrt{c_t} K  \nonumber \\
 &\leq \frac{\beta}{\eta} ~ + ~ \eta \left(T - |S|\right) ~ + ~ \eta \sum_{t\in S} \sqrt{c_t} K  \nonumber \\
 &\leq \frac{\beta}{\eta} ~ + ~ \eta T ~ + ~ \eta K \left(\sum_{t} \sqrt{c_t}\right)  \label{eqn:noregret-known-regret-line}  \\
 &\leq O\left( \frac{\beta}{\eta} ~ + ~ \eta T ~ + ~ \eta B K^2 \right) . \nonumber
\end{align}
Therefore, by setting $\eta = \Theta\left(1 / \max\left\{\sqrt{T} ~,~ K\sqrt{B}\right\}\right)$, we obtain
 \[ Regret \leq O\left(\max\left\{\sqrt{T} ~,~ \frac{T}{\sqrt{B}}\bar{c} \right\}\right) . \]

The proof is easily modified for the case where we have approximate knowledge of the mean $\mu = \frac{1}{T}\sum_t c_t$, and set $K \approx \frac{T}{B}\sqrt{\mu}$.
By Jensen's inequality, $\bar{c} \leq \sqrt{\mu}$, so line \ref{eqn:noregret-known-budget-line} implies
\begin{align*}
 \text{spend} &\leq \frac{T\sqrt{\mu}}{K}  \\
  &\leq B.
\end{align*}
Similarly, by the same inequality, line \ref{eqn:noregret-known-regret-line} implies
\begin{align*}
 Regret &\leq \frac{\beta}{\eta} ~+~ \eta T ~+~ \eta K \left( T\sqrt{\mu} \right)  \\
  &\leq O\left(\frac{\beta}{\eta} ~+~ \eta T ~+~ \eta BK^2 \right)  \\
  &\leq O\left(\max\left\{\sqrt{T} ~,~ \frac{T}{\sqrt{B / \mu}} \right\}\right)
\end{align*}
when $K \leq O\left(\frac{T}{B}\sqrt{\mu}\right)$ and $\eta = \Theta\left(1/\max\left\{\sqrt{T} ~,~ K \sqrt{B} \right\}\right)$.
\end{proof}
} 

\subsection{At-cost lower bounds}
\begin{theorem}[Theorem \ref{theorem:noregret-known-lower}]
\theoremnoregretknownlower
\end{theorem}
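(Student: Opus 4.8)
The plan is to prove this by a two-point, information-theoretic (Le Cam style) argument that reduces the regret lower bound to the statistical impossibility of distinguishing two nearby coins from few samples. I would define two environments indexed by $J \in \{H,T\}$: under $J=H$ the flips $z_1,\dots,z_T$ are i.i.d.\ $\mathrm{Ber}(\tfrac12+\epsilon)$ (heads-biased) and under $J=T$ they are i.i.d.\ $\mathrm{Ber}(\tfrac12-\epsilon)$, for a parameter $\epsilon$ to be tuned. I place a uniform prior on $J$ and lower bound the Bayesian expected regret $\E_{J,\,z_{1:T},\,\mathrm{alg}}[\mathrm{Regret}]$; since this is an average over input sequences, a fixed sequence attaining at least this value must exist, which is exactly what the theorem asks for.

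Next I would reduce the per-round regret to a hypothesis test. The mechanism commits to $h_t\in\{\text{heads},\text{tails}\}$ before seeing $z_t$, so its expected instantaneous loss under $J=H$ is $(\tfrac12-\epsilon) + 2\epsilon\,\Pr[h_t=\text{tails}\mid H]$, and symmetrically under $J=T$. For the benchmark I would use $\E[\min_h \sum_t \ell(h,z_t)\mid H] \le \E[\sum_t \ell(\text{heads},z_t)\mid H] = T(\tfrac12-\epsilon)$ (and its mirror image), so the per-environment expected regret is at least $2\epsilon\sum_t \Pr[h_t\ \text{``wrong''}]$. Averaging over the uniform prior then gives $\E[\mathrm{Regret}] \ge \epsilon\sum_t\big(\Pr[h_t=\text{tails}\mid H] + \Pr[h_t=\text{heads}\mid T]\big)$, turning the bound into a sum of testing-error probabilities.

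I would then apply the Le Cam two-point inequality: interpreting $h_t$ as a test of $H$ versus $T$ based on the algorithm's view, $\Pr[h_t=\text{tails}\mid H]+\Pr[h_t=\text{heads}\mid T] \ge 1 - \mathrm{TV}(V_H,V_T)$, where $V$ is the transcript of \emph{all} purchased samples. The crucial observation is that, because every cost equals $1$ and is known, purchase decisions are entirely the algorithm's own and a null signal reveals nothing about $z_t$; hence $V$ consists only of at most $B$ (in expectation) i.i.d.\ draws. A chain-rule argument then bounds $\mathrm{KL}(V_H\,\|\,V_T) \le \E[\#\text{purchases}]\cdot \mathrm{KL}(\mathrm{Ber}(\tfrac12+\epsilon)\,\|\,\mathrm{Ber}(\tfrac12-\epsilon)) = O(B\epsilon^2)$, and Pinsker gives $\mathrm{TV}(V_H,V_T)=O(\epsilon\sqrt{B})$. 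Choosing $\epsilon=\Theta(1/\sqrt{B})$ makes this at most $\tfrac12$, so each testing error is $\ge \tfrac12$ and $\E[\mathrm{Regret}] \ge \Omega(\epsilon T) = \Omega(T/\sqrt{B})$; de-randomizing over $J$ and $z_{1:T}$ extracts a single bad sequence.

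The main obstacle I anticipate is making the information bound rigorous under \emph{adaptivity}: the algorithm chooses which rounds to purchase based on what it has already seen, so I cannot treat $V$ as a fixed-size i.i.d.\ sample. I would handle this with a martingale/chain-rule decomposition in which each purchased coordinate contributes its per-sample KL conditioned on the past (valid because the unpurchased values are independent of the purchase decisions), and control the expected number of purchased coordinates by the expected-budget constraint. A secondary subtlety is that the expected-budget constraint bounds only $\E[\#\text{purchases}]$ rather than giving a hard cap, so I would either work with the expectation directly inside the KL chain rule or truncate at, say, $2B$ purchases and absorb the small failure probability into the constant.
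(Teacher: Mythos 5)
Your overall strategy is the same as the paper's: both proofs reduce the regret lower bound to the impossibility of distinguishing a $(\tfrac12+\epsilon)$-biased coin from a $(\tfrac12-\epsilon)$-biased coin using information from only $O(B)$ flips, and both set $\epsilon=\Theta(1/\sqrt{B})$. The paper packages this as a black-box reduction (a low-regret algorithm yields a coin-distinguisher that needs only $O(B)$ flips, contradicting the known $\Omega(1/\epsilon^2)$ sample complexity), using Markov's inequality twice and a simulation argument; you instead run the information-theoretic argument directly via Le Cam's two-point method, the KL chain rule, and Pinsker. Your route is self-contained and yields per-round testing errors rather than a single end-of-run test; the paper's route avoids any explicit KL computation. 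Up to one issue, these are essentially equivalent.

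The concrete gap is your claim that ``purchase decisions are entirely the algorithm's own and a null signal reveals nothing about $z_t$.'' In this setting the mechanism posts a pricing \emph{function} $\pi_t(\cdot)$, i.e., a menu that may offer different prices for heads and for tails, and the agent accepts iff $\pi_t(f_t)\ge c_t=1$. The accept/reject event is therefore itself a function of $z_t$, and a rejection can be fully informative: offering price $1$ for heads and price $0$ for tails reveals the coin on every round while paying only when it lands heads. Consequently your chain-rule bound $\mathrm{KL}(V_H\,\|\,V_T)\le \E[\#\,\text{purchases}]\cdot O(\epsilon^2)$ undercounts the information in the transcript. The fix is exactly the paper's Claim 3: classify each round by its (possibly randomized) menu. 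A round whose menu is (don't buy, don't buy) contributes zero conditional KL; any other menu results in a purchase, hence a payment of $1$, with probability at least $\tfrac12-\epsilon\ge\tfrac13$ conditioned on the past, so the expected-budget constraint caps the expected number of \emph{informative} rounds at $3B$. By data processing each informative round contributes at most $\mathrm{KL}\bigl(\mathrm{Ber}(\tfrac12+\epsilon)\,\|\,\mathrm{Ber}(\tfrac12-\epsilon)\bigr)=O(\epsilon^2)$ to the conditional KL, so $\mathrm{KL}(V_H\,\|\,V_T)=O(B\epsilon^2)$ still holds and the remainder of your argument goes through unchanged. With this repair your proof is correct.
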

\begin{proof}
Consider two possible input distributions: i.i.d. flips of a coin that has probability $\frac{1}{2} + \epsilon$ of heads, or of one with probability $\frac{1}{2} - \epsilon$.

It will suffice to prove the following:

\textbf{Claim 1:} If there is an algorithm with budget $B$ and expected regret at most $T\epsilon/6$, then there is an algorithm to distinguish whether a coin is $\epsilon$-heads-biased or $\epsilon$-tails-biased with probability at least $2/3$ using $18B$ coin flips.

This claim implies the theorem because it is known that distinguishing these coins requires $\Omega\left(1/\epsilon^2\right)$ coin flips; in other words, it implies that $\epsilon \geq \Omega\left(1/\sqrt{B}\right)$, so the algorithm's expected regret must be $\Omega\left(T/\sqrt{B}\right)$.

We prove Claim $1$ by proving the following two claims:

\textbf{Claim 2:} If an algorithm's expected regret is at most $T\epsilon/6$, then under the $\epsilon$-heads-biased coin, with probability at least $5/6$, it outputs the heads hypothesis more times than the tails hypothesis.
(And symmetrically under the tails-biased coin.)

\textbf{Claim 3:} An algorithm in this coin setting with budget $B$ can, with probability at least $5/6$, be simulated for $T$ rounds using at most $18B$ coin flips -- in the sense that its behavior is identical to its behavior on a full sequence of $T$ coin flips.

\textit{Proof of Claim 1 from 2 and 3.} We will take an algorithm with budget $B$ and regret $T\epsilon$ and use it to distinguish the coin using $18B$ coin flips: Using Claim 3, we can simulate the algorithm's behavior for all $T$ rounds using at most $18B$ coin flips, except with probability $1/6$.
Then, if the algorithm used the hypothesis heads more times than tails, we guess that the coin is heads-biased, and symmetrically.
By Claim 2, our guess is correct except with probability $1/6$.
By a union bound, therefore, this procedure correctly distinguishes the coin except with probability $1/3$, proving Claim 1.

\textit{Proof of Claim 2.} Suppose the coin being flipped is the heads-biased coin; everything that follows will hold symmetrically for the tails-biased coin.
Now, suppose that the algorithm outputs the hypothesis tails for $M$ of the $T$ rounds.
Since each round is an independent coin toss, if the hypothesis is tails then its expected loss on that round is $\frac{1}{2} + \epsilon$; if heads, $\frac{1}{2} - \epsilon$.
This gives an expected loss of $M\left(\frac{1}{2} + \epsilon\right) + (T-M)\left(\frac{1}{2} - \epsilon\right) = \frac{T}{2} + (2M-T)\epsilon$.

Meanwhile, the expected loss of the optimal hypothesis is at most $T\left(\frac{1}{2} - \epsilon\right)$, since this is the expected loss of the heads hypothesis.
Therefore, the algorithm's expected regret, if it outputs the hypothesis tails $M$ times on average, is at least
 \[ \frac{T}{2} + (2\E M-T)\epsilon - T\left(\frac{1}{2} - \epsilon\right) = 2\E M \epsilon . \]

If the algorithm's regret is at most $T\epsilon/6$, then this implies that $2\E M \epsilon \leq T\epsilon/6$, or $\E M \leq T/12$.
Thus by Markov's inequality, the probability that half or more of the hypotheses are tails is bounded by
\begin{align*}
 \Pr[M \geq T/2] &\leq \frac{\E M}{T/2}  \\
  &\leq 1/6 .
\end{align*}.

\textit{Proof of Claim 3.}
Here, we assume that $\epsilon < 1/6$, or $B$ is larger than a (relatively small) constant.

On each data point, there are four possible menus: whether to buy or not to buy if the point is a heads or is a tails.\footnote{The algorithm may make this a randomized menu, but we can simply consider the outcome of that random menu.}
If the menu is (don't buy, don't buy), then no coin flip is needed (the behavior of the algorithm is identical whether the coin is actually flipped or not).
Otherwise, the coin must be flipped, but the algorithm buys the data point with probability at least $\frac{1}{2}-\epsilon \geq \frac{1}{3}$ (the lowest probability of the remaining three menus).
Thus the expected number of flips needed before the budget is exhausted is at most $3B$, and by Markov's inequality, the probability that it exceeds $18B$ is at most $1/6$.
\end{proof}

\begin{theorem}[Theorem \ref{theorem:noregret-known-lower-gamma}]
\theoremnoregretknownlowergamma
\end{theorem}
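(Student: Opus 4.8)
The plan is to reduce to the homogeneous-cost lower bound of Theorem~\ref{theorem:noregret-known-lower}, which already gives $\Omega(n/\sqrt{B})$ regret for a sequence of $n$ equal-cost coin-flip arrivals. I would embed that hard instance inside a longer sequence and pad it with arrivals that are simultaneously \emph{useless} and \emph{free}, so they contribute nothing to learning, nothing to regret, and nothing to the budget. Concretely, call an arrival \emph{informative} if $f_t$ encodes one of the two coins from Theorem~\ref{theorem:noregret-known-lower} through a $1$-Lipschitz convex loss normalized so that $\Delta_{h_t,f_t}=\|\nabla f_t(h_t)\|_\star = 1$ for every $h_t$ (e.g.\ a linear loss $f_t(h) = -s_t\langle g, h\rangle$ with $\|g\|_\star = 1$ and $s_t\in\{\pm 1\}$ the flip outcome), and \emph{useless} if $f_t \equiv 0$, so that $\Delta_{h_t,f_t}=0$ and $f_t(h)=0$ for all $h$. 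As in Theorem~\ref{theorem:noregret-known-lower}, the two competing instances are obtained by drawing the informative arrivals as i.i.d.\ flips of a single $\epsilon$-heads-biased or $\epsilon$-tails-biased coin, while the useless arrivals are deterministic and identical across both instances.

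For the $\algcost$ and $\bar{c}$ bounds I would take $n = \gamma T$ informative arrivals, each of cost $c_t = 1$, together with $T-n$ useless arrivals of cost $c_t = 0$; this is the ``perfectly correlated'' case in which exactly the valuable points are expensive. By Definition~\ref{def:gamma}, useless arrivals contribute $0$ and each informative arrival contributes $\Delta_{h_t,f_t}\sqrt{c_t}=1$, so $\algcost = n/T = \gamma$ and likewise $\bar{c} = \frac{1}{T}\sum_t\sqrt{c_t} = \gamma$, independently of the algorithm. The useless arrivals cost $0$, hence consume no budget; and since they carry no coin and are identical across the two candidate instances, interacting with them reveals nothing about which coin is in play, while incurring $0$ loss for every hypothesis. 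Therefore the mechanism's behavior restricted to the informative subsequence is exactly an instance of the at-cost problem of Theorem~\ref{theorem:noregret-known-lower} with sequence length $n$, unit costs, and full budget $B$, whose regret is $\Omega(n/\sqrt{B}) = \Omega(\gamma T/\sqrt{B}) = \Omega(\algcost T/\sqrt{B}) = \Omega(\bar{c}\,T/\sqrt{B})$.

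The $\mu$ bound requires a different balance, since the construction above has $\mu=\gamma$ and yields only $\Omega(\mu T/\sqrt{B})$, weaker than the claimed $\Omega(\sqrt{\mu}\,T/\sqrt{B})$. Here I would instead make \emph{all} $T$ arrivals informative with a common cost $c_t = \mu$, so that $\frac1T\sum_t c_t = \mu$. The analysis is Theorem~\ref{theorem:noregret-known-lower} with the budget rescaled: each purchase now costs $\mu$, so a budget of $B$ buys at most $B/\mu$ informative arrivals, and the coin-counting argument (Claim~3 in the proof of Theorem~\ref{theorem:noregret-known-lower}) caps the number of revealed flips at $O(B/\mu)$. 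Distinguishing the coins still needs $\Omega(1/\epsilon^2)$ flips, forcing $\epsilon = \Omega(\sqrt{\mu/B})$ and hence regret $\Omega(T\epsilon) = \Omega(T\sqrt{\mu}/\sqrt{B})$. This same instance has $\algcost = \bar{c} = \sqrt{\mu}$, so in the regime $\mu > B/T$ it simultaneously re-proves the other two bounds; the padded construction is what extends the $\algcost$ and $\bar{c}$ bounds down to their full range of smaller values.

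The main obstacle, and the step needing the most care, is verifying that the padding genuinely decouples and that $\algcost$ is pinned to $\gamma$ rather than being something the algorithm can manipulate. The decoupling is handled by the zero-cost/zero-gradient choice, which removes useless rounds from the budget ledger, the regret sum, and the coin-flip accounting of Claim~3, so that an adaptive mechanism provably gains nothing from them. Pinning $\algcost$ is the reason for insisting on a loss with $\Delta_{h_t,f_t}=1$ identically on informative rounds: because the per-round contribution to $\algcost$ does not depend on the posted hypothesis $h_t$, the parameter of the instance equals $\gamma$ for \emph{every} algorithm, exactly meeting the quantifier ``there exists a sequence with parameter $\algcost$.'' Finally one should check the nontriviality condition $B < n$ (resp.\ $B < \mu T$); below it, $\algcost T/\sqrt{B}$ falls beneath the classical $\Omega(\sqrt{T})$ regret floor and the statement is subsumed by the standard lower bound.
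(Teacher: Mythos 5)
Your proposal is correct and follows essentially the same route as the paper: pad the homogeneous-cost coin-distinguishing instance of Theorem \ref{theorem:noregret-known-lower} with zero-cost, regret-irrelevant arrivals so that $\algcost = \bar{c} = \gamma$ while the effective hard subsequence has length $\gamma T$, and for the $\mu$ bound rescale all costs to $\mu$ so the budget buys only $B/\mu$ flips, forcing $\epsilon = \Omega(\sqrt{\mu/B})$. The only cosmetic difference is that the paper's padding points carry constant loss $1$ for every hypothesis rather than identically zero loss; both choices make them irrelevant to the regret and to the parameter $\algcost$.
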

\begin{proof}
We reduce to the previous theorem.
Consider the following distribution on input sequences.
There are three possible data points: heads, tails, and ``no coin''.
There are still two hypotheses, heads and tails. Both have loss $1$ on the ``no coin'' data point.

Now fix any $\algcost \in [0,1]$.
We will first send $(1-\algcost)T$ data points, all of which are ``no coin''.
The loss of either hypothesis on all of these points is $1$, and the cost of these points is zero.
Then, we will choose either the $\epsilon$-heads-biased or $\epsilon$-tails-biased coin, with $\epsilon = 1/\sqrt{B}$, and send $T' = \algcost T$ coin flips, just as in the previous proof.

Because the first $(1-\algcost)T$ points are irrelevant to the regret, the regret of any algorithm is simply its regret on these final $T'$ data points, which by the previous proof is at least on the order of $T'\epsilon = T'/\sqrt{B} = \algcost T/\sqrt{B}$.

Now to check that the parameter $\algcost$ chosen above really is the $\algcost$ value of the data sequence, note that the convexified hypothesis space for this problem is the space of distributions $p \in \mathbb{R}^2$ on $\{heads, tails\}$, with loss $1 - p\cdot(1,0)$ if the coin is heads or $1-p\cdot(0,1)$ if the coin is tails.
The gradient of the loss on either point for all $p$ is $(1,0)$ or $(0,1)$ respectively, and both have norm $1$.
So $\Delta_{h_t,f_t} = 1$ for all ``heads'' and ``tails'' data points.
Thus we have that $\frac{1}{T} \sum_t \Delta_{h_t,f_t} \sqrt{c_t} = \frac{T'}{T} = \algcost$.

Finally, noting that $\algcost = \bar{c}$ in this case gives the bound containing $\bar{c}$. For the lower bound with $\mu$, take the exact construction in Theorem \ref{theorem:noregret-known-lower} and let each point have $c_t = \mu$ instead of $c_t = 1$.
\end{proof}

\section{No regret --- main setting}

\begin{theorem}[Theorem \ref{theorem:no-regret-unknown-upper}]
\theoremnoregretunknownupper
\end{theorem}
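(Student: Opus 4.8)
The plan is to mirror the at-cost analysis behind Theorem \ref{theorem:noregret-nonstrategic}, since Mechanism \ref{mech:no-regret-unknown} uses the identical sampling probabilities $q_t = \min\{1, \Delta_{h_t,f_t}/(K\sqrt{c_t})\}$. The one genuinely new ingredient is the expected \emph{payment} per round, because here the mechanism pays its randomly drawn posted price rather than the cost $c_t$. First I would fix a round $t$ with $h_t, f_t$ fixed, write $\Delta := \Delta_{h_t,f_t}$, and compute the expected payment under the pricing distribution of Observation \ref{observation:pricing-pdf}. A transaction occurs iff the drawn price $\pi$ satisfies $\pi \geq c_t$, in which case we pay $\pi$. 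Integrating the density $\pi \mapsto \Delta/(2K\pi^{3/2})$ on $(\max\{c_t,c^*\},1)$ against the point mass $\Delta/K$ at $\pi=1$ gives, in the generic case $c_t > c^* = \Delta^2/K^2$, expected payment $\frac{\Delta}{K}(2-\sqrt{c_t})$, and in the case $c_t \leq c^*$ expected payment $\frac{\Delta}{K}(2-\Delta/K)$. The key fact I want to extract is that in every case the expected payment is at most $\frac{2\Delta_{h_t,f_t}}{K}$, \emph{independent of} $c_t$. (The degenerate regime $\Delta \geq K$, where the price is a point mass at $1$ and the payment equals $1 \leq \Delta/K$, is checked separately and also obeys this bound.) This cost-independence is precisely the structural departure from the at-cost setting, where the expected spend $q_t c_t$ shrinks with the cost; it is why $\algcostmax$ rather than $\algcost$ governs the budget.

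Summing over rounds, the total expected spend is at most $\frac{2}{K}\,\E\sum_t \Delta_{h_t,f_t} = \frac{2T}{K}\algcostmax$ by the definition of $\algcostmax$. Hence choosing $K = \Theta\!\left(\frac{T}{B}\algcostmax\right)$ with an appropriate constant satisfies the expected budget constraint; any larger $K$ also works, since every $q_t$ only decreases.

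Next I would bound the regret via Lemma \ref{lemma:ftrl-iwtd-regret}. Because the $q_t$ are exactly those of the at-cost setting, the summation term is controlled by the same split used in Theorem \ref{theorem:noregret-nonstrategic}: the ``valuable'' points ($q_t=1$) each contribute $\Delta_{h_t,f_t}^2 \leq 1$, and the remaining points each contribute $\Delta_{h_t,f_t}\sqrt{c_t}\,K$, so $\E\sum_t \frac{\Delta_{h_t,f_t}^2}{q_t} \leq T + K\,\E\sum_t \Delta_{h_t,f_t}\sqrt{c_t} = T + KT\algcost$. Plugging in $K = \Theta\!\left(\frac{T}{B}\algcostmax\right)$ makes this $T + \Theta\!\left(\frac{T^2}{B}\algcost\algcostmax\right) = T + \Theta\!\left(\frac{T^2}{B}g^2\right)$ with $g = \sqrt{\algcost\algcostmax}$. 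The regret bound of Lemma \ref{lemma:ftrl-iwtd-regret} then reads $\frac{\beta}{\eta} + 2\eta\bigl(T + \Theta(T^2 g^2/B)\bigr)$, and setting $\eta = \Theta\!\bigl(1/\sqrt{T + T^2 g^2/B}\bigr)$ balances the two terms to yield $O\!\left(\max\{\sqrt{T},\ \frac{T}{\sqrt{B}}g\}\right)$, as claimed.

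Finally, the ``graceful degradation'' variants follow from the chain $\algcost \leq \bar{c} \leq \sqrt{\mu}$ (the first since $\Delta_{h_t,f_t}\leq 1$, the second by Jensen) together with $\algcostmax \leq 1$. When only a cruder statistic is known, I would set $K = \Theta(T/B)$ using $\algcostmax \leq 1$ (which still satisfies the budget) and bound the regret summation by $T + KT X$, where $X \in \{\algcost,\ \bar{c},\ \sqrt{\mu}\}$ is the known upper bound on $\algcost$; tuning $\eta$ as above gives $g = \sqrt{X}$, i.e.\ $g=\sqrt{\algcost}$, $g=\sqrt{\bar c}$, or $g=\mu^{1/4}$ respectively. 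The main obstacle is the first step, establishing the clean cost-independent payment bound $2\Delta_{h_t,f_t}/K$, since everything downstream reuses the at-cost machinery; the remaining care is only in handling the point mass at price $1$ and the degenerate $\Delta \geq K$ regime.
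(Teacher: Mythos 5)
Your proposal is correct and follows essentially the same route as the paper's proof: the same per-round expected-payment computation $\frac{\Delta_{h_t,f_t}}{K}\left(2\sqrt{c_{max}}-\sqrt{\max\{c_t,c^*\}}\right)$ (which the paper keeps as $K_{min}=\frac{T}{B}(2\algcostmax-\algcost)$ before bounding by $\frac{2T}{B}\algcostmax$, exactly your cost-independent bound), the same split of the importance-weighted regret sum into $q_t=1$ and $q_t<1$ arrivals giving $T+KT\algcost$, the same tuning of $\eta$, and the same chain $\algcost\leq\bar c\leq\sqrt{\mu}$ with $\algcostmax\leq 1$ for the weaker-knowledge variants. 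No gaps.
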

\begin{proof}
The proof will proceed by finding a close-to-optimal value $K$ of the normalizing constant by considering the budget constraint, then plugging this into the regret term to get a bound.
The constant maximum price plays into this proof in a slightly non-obvious way.
Because of this, instead of setting this maximum price equal to $1$, we consider the generalization where costs may lie in $[0,c_{max}]$.

Consider time $t$ when $(c_t, f_t)$ arrives.
Recall that the approach at time $t$ is to draw a price for $f_t$ from the distribution where
 \[ A_t(c) = \Pr[\text{price} \geq c] = \min\left\{ 1 ~,~ \frac{\Delta_{h_t,f_t}}{K \sqrt{c}} \right\}. \]
Consider then the induced posted-price distribution, which is pictured in Figure \ref{fig:price-distribution}.
It has a point mass at $c_{max}$ of probability\footnote{If this quantity is greater than $1$, then we post a price of $c_{max}$ for this datapoint, and what follows will only be a looser upper bound on the amount spent.} $\Delta_{h_t,f_t} / K \sqrt{c_{max}}$.
Otherwise, it is continuous on the interval $[c^*, c_{max}]$ with density
 \[ -A_t'(\pi) = \frac{ \Delta_{h_t,f_t} }{ 2 K \pi^{3/2}} , \]
and the lower endpoint $c^*$ satisfies $A_t(c^*) = 1$, \emph{i.e.} $c^* = \Delta_{h_t,f_t}^2 / K^2$.

We first find the bound on $K$ such that the expected budget constraint is satisfied.
The expected amount spent on arrival $t$ can be computed as follows.
\begin{align*}
 &c_{max} \Pr[\text{price } = c_{max}] ~ + ~ \int_{\max\{c_t,c^*\}}^{c_{max}} x \left(\text{pdf at $x$}\right) dx  \\
 &= c_{max} \frac{\Delta_{h_t,f_t}}{K\sqrt{c_{max}}} + \int_{\max\{c_t,c^*\}}^{c_{max}} x \frac{\Delta_{h_t,f_t}}{2 K x^{3/2}} dx  \\
 &= \frac{\Delta_{h_t,f_t}}{K}\left(\sqrt{c_{max}} + \int_{\max\{c_t,c^*\}}^{c_{max}} \frac{1}{2\sqrt{x}} dx \right) \\
 &= \frac{\Delta_{h_t,f_t}}{K}\left(2\sqrt{c_{max}} - \sqrt{\max\{c_t,c^*\}}\right) .
\end{align*}

Now let $c_t^*$ be the value of $c^*$ for arrival $t$ (to distinguish its value in different timesteps).
By the budget constraint, we need to pick $K$ so that
 \[ \sum_t \E\left[\text{spend on arrival $(c_t, f_t)$}\right] \leq B , \]
so
 \[ \E \sum_t \frac{\Delta_{h_t,f_t}}{K}\left(2\sqrt{c_{max}} - \sqrt{\max\{c_t,c^*\}}\right) \leq B . \]
Now we make a simplification: If we substitute $c_t$ in for $\max\{c_t,c^*\}$, then the left-hand side only increases. Thus, to satisfy the previous inequality, it suffices to choose $K$ to satisfy
 \[ \E \sum_t \frac{\Delta_{h_t,f_t}}{K}\left(2\sqrt{c_{max}} - \sqrt{c_t}\right) \leq B . \]
Thus, we let
 \[ K_{min} = \E \frac{1}{B} \sum_t \Delta_{h_t,f_t}\left(2\sqrt{c_{max}} - \sqrt{c_t}\right) . \]
Recall our definition of the ``difficulty-of-the-input'' parameter
 \[ \algcost = \E \frac{1}{T} \sum_{t} \Delta_{h_t,f_t} \sqrt{c_t} , \]
and let
 \[ \algcostmax = \E \frac{1}{T} \sum_{t} \Delta_{h_t,f_t}\sqrt{c_{max}} . \]
Then we have
 \[ K_{min} = \frac{T}{B} \left(2\algcostmax - \algcost\right) . \]
We now have the setup to quickly derive bounds such as the theorem statements.
Note that any choice of $K \geq K_{min}$ satisfies the expected budget constraint.

For the first regret bound, suppose that we know both $\algcost$ and $\algcostmax$ up to a constant factor.
Then we can set $K = O(K_{min})$.
By Lemma \ref{lemma:ftrl-iwtd-regret}, the expected regret is bounded by
 \[ Regret \leq \frac{\beta}{\eta} + \eta \sum_t \frac{\Delta_{h_t,f_t}^2}{A_t(c_t)} \]
where $\beta$ is a constant and $\eta$ will be chosen later.

As in the known-costs scenario, let us split into those arrivals that we purchase with probability $1$ (this corresponds to $c_t < c_t^*$) and the others, letting $S = \{t : A_t(c_t) < 1\}$.
Then the summation term in the regret bound is bounded by a constant times
\begin{align}
 &\sum_{t\not\in S} \Delta_{h_t,f_t}^2 ~ + ~ \sum_{t\in S} \Delta_{h_t,f_t}\sqrt{c_t} K_{min}  \nonumber \\
 &\leq T ~ + ~ \frac{T^2}{B} \algcost \left(2\algcostmax - \algcost\right) \label{eqn:strategic-regret-upper-expression}
\end{align}
where we have used the Lipschitz assumption on the loss function $\Delta_{h_t,f_t} \leq 1$.

As $\algcostmax \geq \algcost$, we do not lose much by taking the upper bound
\begin{equation}
 M = T ~ + ~ 2\frac{T^2}{B} \algcost \cdot \algcostmax . \label{eqn:noregret-strategic-M-upper}
\end{equation}
Now we can choose $\eta = \Theta(1/M)$ and obtain our regret bound of
\begin{align*}
 Regret &\leq O\left(\sqrt{M}\right)  \\
        &\leq O\left(\max\left\{ \sqrt{T} ~,~ \frac{T}{\sqrt{B}}\sqrt{\algcost \cdot \algcostmax} . \right\} \right).
\end{align*}
The other regret bounds will all follow by (1) upper-bounding $\algcostmax \leq \sqrt{c_{max}}$; (2) letting $K = \frac{T}{B}\sqrt{c_{max}}$; (3) upper-bounding $\algcost$; and (4) setting $\eta$ appropriately.
Note that this can only increase $K$, so the expected budget constraint is still satisfied.
The modifications simply give a different bound in Expression \ref{eqn:noregret-strategic-M-upper}, from which the rest of the argument follows analogously.

From (1) and (2), Expression \ref{eqn:noregret-strategic-M-upper} becomes
 \[ M = T ~ + ~ 2\frac{T^2}{B} \algcost \sqrt{c_{max}} . \]
First, if we know $\algcost$, then picking $\eta = \Theta(1/M)$ gives the corresponding bound.

Second, with only knowledge of $\bar{c} = \frac{1}{T} \sum_t \sqrt{c_t}$, observe that $\algcost \leq O(\bar{c})$ and plug in.

Third, observe that by Jensen's inequality $\bar{c} \leq \sqrt{\mu}$ (where $\mu = \frac{1}{T}\sum_t c_t$) and plug in.
\end{proof}

\fi  

\end{document}